\DeclareMathOperator{\prob}{\mathbf{P}}
\DeclareMathOperator{\expt}{\mathbf{E}}
\newcommand{\zeroone}{0$-$1}
\newcommand{\reals}{\mathbb{R}}
\newcommand{\ints}{\mathbb{Z}}
\newcommand{\nats}{\mathbb{N}}
\theoremstyle{plain}
\newtheorem{theorem}{Theorem}[section]
\newtheorem{lemma}[theorem]{Lemma}
\newtheorem{corollary}[theorem]{Corollary}
\theoremstyle{definition}
\newtheorem{definition}{Definition}[section]
\newclass{\OPP}{OPP}
\newclass{\OP}{OP}
\newclass{\BPEXP}{BPEXP}
\newlang{\linspckt}{\textsc{Linear-size Series-Parallel Circuits}}
\newlang{\sat}{\textsc{sat}}
\newlang{\seth}{\mathbf{SETH}}
\newlang{\maxtwosat}{\textsc{Max-2-Sat}}
\newlang{\smt}{\textsc{smt}}
\newlang{\pit}{\textsc{pit}}
\newlang{\CAPP}{CAPP}
\newlang{\perm}{\textsc{Permanent}}
\newlang{\satisfiability}{\textsc{Satisfiability}}
\newlang{\uniquesat}{\textsc{unique-sat}}
\newcommand{\ksat}[1][k]{\lang{\textsc{$#1$-sat}}}
\newlang{\dkcsp}{\textsc{$(d,k)$-csp}}
\newlang{\cnf}{\textsc{cnf}}
\newlang{\maxsat}{\textsc{maxsat}}
\newlang{\dnf}{\textsc{dnf}}
\newcommand{\kcnf}[1][k]{\lang{\textsc{$#1$-cnf}}}
\newcommand{\maxksat}[1][k]{\textsc{max-$#1$-sat}}
\newlang{\cnfsat}{\textsc{cnfsat}}
\newlang{\cktsat}{\textsc{Circuit Sat}}
\newlang{\lincktsat}{\textsc{Linear-size Circuit Sat}}
\newlang{\turingsat}{\textsc{Turing Sat}}
\newlang{\ind}{\textsc{Independent Set}}
\newlang{\maxind}{\textsc{Max Independent Set}}
\newlang{\subiso}{\textsc{Subgraph Isomorphism}}
\newlang{\hamp}{\textsc{Hamiltonian Path}}
\newlang{\parityfunction}{\textsc{Parity}}
\newlang{\hamc}{\textsc{Hamiltonian Cycle}}
\newlang{\clique}{\textsc{Clique}}
\newlang{\colorability}{\textsc{Colorability}}
\newlang{\kcolorability}{\textsc{$k$-Colorability}}
\newlang{\threecolorability}{\textsc{$3$-Colorability}}
\newlang{\fourcolorability}{\textsc{$4$-Colorability}}
\newlang{\vertexcover}{\textsc{Vertex Cover}}
\newlang{\ksetcover}{\textsc{$k$-Set Cover}}
\title{A Satisfiability Algorithm for  Sparse Depth Two Threshold Circuits}
\author{{\large\sc Russell Impagliazzo}
\thanks{This research is supported by NSF
grant CCF-1213151 from the
Division  of Computing and Communication Foundations.
Any opinions,
findings and conclusions or
recommendations expressed in this material are those
of the authors and do
not necessarily reflect the
views of the National Science Foundation.}
\\
\and {\large\sc Ramamohan Paturi} \footnotemark[1] \\
\and {\large\sc Stefan Schneider} \footnotemark[1] \\
\\
Department of Computer Science and Engineering\\
University of California, San Diego\\La Jolla, CA 92093-0404, USA\\
E-Mail: \{russell, paturi, stschnei\}@cs.ucsd.edu
}
\date{April 2013}
\begin{document}

\begin{titlepage}
	\maketitle
	\begin{abstract}
  We give a nontrivial algorithm for the satisfiability problem for
  $cn$-wire threshold circuits of depth two which is better than
  exhaustive search by a factor $2^{sn}$ where $s= 1/c^{O(c^2)}$.  We
  believe that this is the first nontrivial satisfiability algorithm
  for $cn$-wire threshold circuits of depth two. The independently
  interesting problem of the feasibility of sparse $\zeroone$ integer
  linear programs is a special case. To our knowledge, our algorithm
  is the first to achieve constant savings even for the special case
  of Integer Linear Programming.  The key idea is to reduce the
  satisfiability problem to the \emph{Vector Domination Problem}, the
  problem of checking whether there are two vectors in a given
  collection of vectors such that one dominates the other
  component-wise.

  We also provide a satisfiability algorithm with constant savings
for depth two circuits with symmetric gates where the total weighted fan-in 
is at most $cn$.

  One of our motivations is proving strong lower bounds for $\TC^0$
  circuits, exploiting the connection (established by Williams)
  between satisfiability algorithms and lower bounds.  Our second
  motivation is to explore the connection between the expressive power
  of the circuits and the complexity of the corresponding circuit
  satisfiability problem.
\end{abstract}

	\thispagestyle{empty}
	\clearpage
\end{titlepage}
\section{Introduction}
Satisfiability testing is both a canonical $\NP$-complete problem
\cite{Coo71, Lev73} and one of the most successful general approaches
to solving real-world constraint satisfaction problems.  In
particular, optimized $\cnfsat$ heuristics are used to address a
variety of combinatorial search problems successfully in practice,
such as circuit and protocol design verification.  The exact
complexity of the satisfiability problem is also central to complexity
theory, as demonstrated by Williams \cite{Williams_2010_stoc}, who has
showed that any improvement (by even a superpolynomial factor compared
to exhaustive search) for the satisfiability problem for general
circuits implies circuit lower bounds.  Furthermore he has
successfully used the connection to prove superpolynomial size bounds
for $\ACC^0$ circuits using a novel nontrivial satisfiability
algorithm for $\ACC^0$ circuits, solving a long standing open problem
\cite{Williams_2011_ccc}.

This raises the questions: For which circuit models do nontrivial
satisfiability algorithms exist?  How does the amount of improvement
over exhaustive search relate to the expressive power of the model
(and hence to lower bounds)?  Can satisfiability heuristics for
stronger models than $\cnf$ be useful for real-world instances?

Both the connection to circuit lower bounds and to heuristic search
algorithms point to threshold circuits as the model to study next.
Bounded depth polynomial size threshold circuits $\TC^0$ are the next
natural circuit class stronger than $\ACC^0$. $\TC^0$ is a powerful
bounded depth computational model. It has been shown that basic
operations like addition, multiplication, division, and sorting can be
performed by bounded depth polynomial size threshold
circuits. \cite{ChandraStockmeyerVishkin_1984_siam,BoppanaSipser_1991}. In
contrast, unbounded fan-in bounded depth polynomial size circuits over
the standard basis (even when supplemented with mod $p$ gates for
prime $p$) cannot compute the majority function
\cite{BoppanaSipser_1991}.  However, our understanding of the
limitations of bounded depth threshold circuits is extremely weak.
Exponential lower bounds for such circuits are only known for the
special case of depth two and bounded weight
\cite{HajnalMaassPudlakSzegedyTuran_1987_focs}.  For larger depth
circuits, barely superlinear lower bounds are known on the number of
wires \cite{IPS97}.

On the other hand, satisfiability for depth two threshold circuits
contains as special cases some well known problems of both theoretical
and practical significance. $\cnfsat$ is one such special case, since
both conjunctions and disjunction are a special case of threshold
gates.  $\maxksat$, the optimization form of $\kcnf$ satisfiability,
is another special case, since the top threshold gate can count the
number of satisfied clauses for an assignment.  Even for
$\maxksat[3]$, no algorithms with a constant factor savings over
exhaustive search are known (although such an algorithm is provided
for $\maxksat[2]$ in \cite{Williams_2005_tcs}).  Another special case
is Integer Linear Programming (ILP), a problem that is very useful in
expressing optimization problems both in theory and practice.  Testing
the feasibility for a $\zeroone$ ILP is equivalent to testing the
satisfiability of a circuit with two levels, the bottom consisting of
threshold gates and the top level being a conjunction.  So both
theoretical and real-world motivation points us to trying to
understand the satisfiability problem for depth two threshold
circuits.

Santhanam \cite{Santhanam_2010_focs} gives an algorithm with constant
savings for linear size formulas of AND and OR gates with fan-in
two. However, this does not directly give an algorithm for depth two
threshold circuits, as converting a linear size threshold circuit into
a formula over AND and OR gates gives quadratic size.

In all of these related problems, a key distinction is between the
cases of {\em linear size} and {\em superlinear size} circuits.  In
particular, an algorithm with constant savings for depth two threshold
circuits of superlinear size would refute the \emph{Strong Exponential
  Time Hypothesis} ($\seth$) \cite{ImpagliazzoPaturi_1999_jcss}, since
$\kcnf$ for all $k$ can be reduced (via Sparsification Lemma
\cite{ImpagliazzoPaturiZane_1998_jcss}) to superlinear size depth two
threshold circuits \cite{Calabro_2009_phd}. 
($\seth$ says that for every $\delta <1$, there is a $k$ such that
$\ksat$ cannot be solved in time $O(2^{\delta n})$.)
However, for $\cnfsat$ and $\maxsat$, algorithms
with constant savings are known when the formula is {\em linear size}
\cite{Schuler_2005_jalg, DantsinWolpert_2006_sat,
  AustrinBenabbasChattopadhyayPitassi_2012}.  So, short of refuting
$\seth$, the best we could hope for is to extend such an improvement to
the linear size depth two threshold circuit satisfiability problem.

In this paper, we give the first improved algorithm, which obtains a
constant savings in the exponent over exhaustive search for the
satisfiability of $cn$-wire, depth two threshold circuits for every
constant $c$.  As a consequence, we also get a similar result for
linear-size ILP. Under $\seth$, this is qualitatively the best we could
hope for, but we expect that further work will improve our results
quantitatively. For example, our savings is exponentially small in
$c$, whereas in, e.g., the satisfiability algorithm of
\cite{ImpagliazzoMatthewsPaturi_2012_soda} for constant depth and-or
circuits, it is polylogarithmic in $c$.  We consider this just a first
step towards a real understanding of the satisfiability problem for
threshold circuits, and hope that future work will get improvements
both in depth and in savings.

While we do not obtain any new circuit lower bounds, there is some
chance that this line of work could eventually yield such bounds. For
example, if there is an algorithm for any constant depth threshold
circuit with super-inverse-polynomial savings in $c$, then $\NEXP
\not\in \TC^0$ by applying \cite{Williams_2010_stoc}.

Our main sub-routine is an algorithm for the {\em Vector Domination
  Problem}: given $n$ vectors in $\reals^d$, is there a pair of
vectors so that the first is larger than the second in every
coordinate?  We show that, when $d < c \log n$ for a constant $c$,
this problem can be solved in subquadratic time.  In contrast,
Williams \cite{Williams_2005_tcs} shows that solving even the Boolean
special case of vector domination with a subquadratic algorithm when
$d = \omega(\log n)$ would refute $\seth$.  We think the Vector
Domination Problem may be of independent interest, and might be used
to reason about the likely complexities of other geometric problems
within polynomial time.

\section{Notation}
Let $V$ be a set of variables with $|V| = n$. An \emph{assignment} on
$V$ is a function $V \to \{0,1\}$ that assigns every variable a
Boolean value. A \emph{restriction} is an assignment on a set $U
\subseteq V$. For an assignment $\alpha$ and a variable $x$,
$\alpha(x)$ denotes the value of $x$ under the assignment $\alpha$.

A $\emph{threshold gate}$ on $n$ variables $x_1, \ldots , x_n$ is
defined by $\emph{weights}$ $w_i \in \reals$ for $1 \leq i \leq n$ and
a \emph{threshold} $t$.  The output of the gate is $1$, if
$\sum_{i=1}^n w_i x_i \geq t$ and $0$ otherwise. The \emph{fan-in} of
the threshold gate is the number of nonzero weights. We call a
variable an input to a gate if the corresponding weight is nonzero.
We also extend the definition of a threshold gate to $d$-ary symmetric
gates whose inputs and outputs are $d$-ary.

For a collection of threshold gates, the \emph{number of wires} is the
sum of their fan-ins.  A \emph{depth two threshold circuit} consists
of a collection of $m$ threshold gates (called the {\em bottom-level
  gates}) on the same $n$ variables and a threshold gate (called the
{\em top-level gate}) on the outputs of the bottom-level gates plus
the variables. The output of the circuit is the output of the
top-level gate. We call a variable with nonzero weight at the
top-level gate a \emph{direct wire}. For a $d$-ary depth two threshold
circuit, the gates are $d$-ary gates and the top-level gate only
outputs Boolean values.  The number of wires of a depth two threshold
circuit is the number of wires of the bottom-level gates. We call a
threshold circuit \emph{sparse} if the the number of wires is linear
in the number of variables.

A \emph{satisfiability algorithm} for depth two threshold circuits is
an algorithm that takes as input a depth two threshold circuit and
outputs an assignment such that the circuit evaluates to 1 under the
assignment.

A \emph{linear function} on a variable set $x_1,\ldots, x_n$ is a
function $g(x_1,\ldots,x_n) = \sum_{i=1}^n w_i x_i$, where $w_i \in
\reals$ are called the \emph{coefficients}. The \emph{size} of a
linear function is the number of nonzero coefficients. A \emph{linear
  inequality} is an inequality of the form $g(x_1,\ldots, x_n) \geq
t$.

An algorithm for the \emph{Integer Programming Feasibility Problem}
takes as input a collection of linear inequalities on variables
$x_1,\ldots,x_n$ and outputs an assignment $\{x_1,\ldots,x_n\} \to
\ints$ such that all inequalities are satisfied. We call an inequality
of the form $0 \leq x_i \leq d-1$ a \emph{capacity constraint}.  In a
\emph{0-1 Integer Programming Feasibility Problem} each variables is
constrained to be 0 or 1.

We use $\tilde{O}(f(n))$ to denote the asymptotic growth of a function
$f$ ignoring polynomial factors. Informally, we say an algorithm is
\emph{nontrivial}, if its time is significantly better than exhaustive
search. If ${\cal A}$ is a satisfiability algorithm for circuits with
$n$ variables with run time $\tilde{O}\left(2^{(1-s)n} \right)$, we
call $s$ the \emph{savings} of the algorithm over exhaustive search.

For a vector $u$, we use $u_i$ for coordinate $i$. 

All logarithms are base $2$ unless noted otherwise.

\section{Results and Techniques}
The main contribution of the paper is a nontrivial satisfiability
algorithm for sparse threshold circuits of depth 2.  More precisely,
we prove the following:

\begin{theorem}
        \label{thm:Main}
        There is a satisfiability algorithm for depth two threshold
        circuits on $n$ variables with $cn$ wires that runs in time
        $\tilde{O}\left(2^{\left(1 - s\right) n}\right)$ where
        \begin{equation*}
                s = \frac1{c^{O(c^2)}}
        \end{equation*}
\end{theorem}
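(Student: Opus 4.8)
The plan is to solve the problem by a meet-in-the-middle reduction to the \emph{Vector Domination Problem}, using its subquadratic algorithm for dimension $d<\gamma\log N$ (on $N$ vectors, $\gamma$ a constant) as the engine.

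First partition the $n$ variables into two halves $A,B$ of size $n/2$, and let $N=2^{n/2}$. For each bottom gate $i$ with weights $w_{ij}$ and threshold $t_i$, split its weighted sum as $a_i(\alpha)+b_i(\beta)$, where $a_i(\alpha)=\sum_{j\in A}w_{ij}\alpha(j)$ depends only on the assignment $\alpha$ to $A$ and $b_i(\beta)=\sum_{j\in B}w_{ij}\beta(j)$ only on $\beta$; gate $i$ then fires on $(\alpha,\beta)$ iff $a_i(\alpha)\ge t_i-b_i(\beta)$. In the special case where the top gate is an AND (equivalently, $0$-$1$ ILP feasibility), the circuit is satisfied by $(\alpha,\beta)$ exactly when $a_i(\alpha)\ge t_i-b_i(\beta)$ holds simultaneously for all $i$. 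So setting $u(\alpha)=\bigl(a_i(\alpha)\bigr)_i$ and $v(\beta)=\bigl(t_i-b_i(\beta)\bigr)_i$, a satisfying assignment is precisely a pair with $u(\alpha)$ dominating $v(\beta)$ coordinatewise — a (bipartite) Vector Domination instance on $2N$ vectors. Since every bottom gate contributes at least one of the $cn$ wires (constant gates are absorbed into the top gate), the number of gates, and hence the dimension, is at most $cn=2c\log N$, so the subquadratic algorithm applies and solves the conjunction case in time $N^{2-s'}=2^{(1-s)n}$.

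The hard part is the general top threshold gate $\sum_i W_i g_i+\sum_j W'_j x_j\ge T$, where we no longer need \emph{all} comparisons to hold but only a weighted-threshold fraction of them, so pure domination is too strong. I would first fold the direct-wire contribution $\sum_j W'_j x_j=p(\alpha)+q(\beta)$ into two extra coordinates, and then reduce this ``weighted threshold of coordinate comparisons'' to the pure domination case. My approach is a cleanup phase built on random restrictions that exploits the $cn$-wire budget: at most $cn/k$ bottom gates have fan-in at least $k$, and under a restriction fixing a suitable fraction of the variables each surviving large-fan-in gate is driven to a constant (or to small fan-in) with good probability, while high-fan-out variables (again at most $cn/k$ of them) are set aside for brute force. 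After restriction the surviving instance has so few distinct bottom-gate behaviors that one can branch over the firing patterns of the heavy part of the top gate, and within each branch the top gate collapses to a conjunction, i.e.\ to a Vector Domination instance handled as above. Controlling the number of restriction leaves and branches against the savings of the domination step is the delicate, and principal, obstacle.

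Finally I would assemble the pieces. Each leaf of the restriction/branching tree produces a Vector Domination instance of dimension $O(c)\log N'$ on about $2^{n'/2}$ vectors; summing the subquadratic cost over all leaves preserves a bound of the form $2^{(1-s)n}$. Since the savings of the domination subroutine shrinks as its dimension-to-$\log N$ ratio $\gamma=O(c)$ grows, and is further diluted by the branching overhead of the top-gate reduction, the net savings works out to $s=1/c^{O(c^2)}$, as claimed.
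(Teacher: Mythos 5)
Your architecture matches the paper's (split-and-list to Vector Domination, branching over which bottom gates fire so the top gate collapses to a conjunction of linear inequalities, and a random restriction to reduce the number of active bottom gates), but two steps as written would fail. First, the Vector Domination subroutine is \emph{not} subquadratic for dimension $d<\gamma\log N$ with an arbitrary constant $\gamma$: the running time is $O\bigl(\binom{d+\log N+2}{d+1}N\bigr)$, which beats $N^2$ only when $d$ is below roughly $0.27\log N$ (equivalently, at most about $0.136$ inequalities per variable). So your claim that the AND-top-gate case with up to $cn=2c\log N$ coordinates is handled directly by the subquadratic algorithm is wrong for any $c\geq 1$; the entire point of the restriction machinery is to drive the number of surviving constraints down to $\delta n'$ for a small \emph{absolute} constant $\delta$ relative to the number $n'$ of free variables, before the domination step is ever invoked.

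Second, the cleanup phase has a quantitative gap that a single fan-in cutoff $k$ cannot close. Threshold gates of large fan-in are not ``driven to a constant with good probability'' by a random restriction (they are not OR gates), so every gate retaining at least two live inputs must be branched over or carried as a coordinate. For gates of fan-in at most $k$ to simplify you need survival probability $p\lesssim 1/(ck)$, leaving only about $pn$ free variables; but your bound of $cn/k$ on the number of larger gates is measured against $n$, and $cn/k\approx (c^2/\delta^2)\cdot\delta pn$, far exceeding the budget $\delta pn$ that the domination step can absorb. The missing idea is a Fan-In Separation Lemma: for $a=c^2/\delta^2$ there is some $k\leq a^{c^2/\delta^2}$ such that at most $\delta^2 n/c$ wires feed gates of fan-in in $(k,ka]$; then gates of fan-in $\leq k$ simplify, gates of fan-in $>ka$ number at most $cn/(ka)=\delta pn$, and the middle band is negligible by the choice of $k$. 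This lemma is also the sole source of the $s=1/c^{O(c^2)}$ bound (via $p\approx\delta/(ck)$ with $k$ doubly indexed in $c$); your proposal asserts that bound without any mechanism that would produce it.
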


While the proof in Section \ref{Fanin} assumes a Boolean
inputs for simplicity, the proof easily extends to threshold circuits
with $d$-ary inputs, yielding the following corollary.

\begin{corollary}
  There is a satisfiability algorithm for depth two threshold circuits
  on $n$ $d$-ary variables with $cn$ wires that runs
in time
  $\tilde{O}\left(d^{\left(1 - s\right) n}\right)$ where
        \begin{equation*}
                s = \frac1{c^{O(c^2)}}
        \end{equation*}
\end{corollary}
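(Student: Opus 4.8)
The plan is to run the proof of \thmref{thm:Main} essentially verbatim, replacing the Boolean domain $\{0,1\}$ by the $d$-ary domain $\{0,1,\ldots,d-1\}$ at every point where assignments are enumerated. Exhaustive search over $n$ $d$-ary variables costs $d^n$ rather than $2^n$, so the goal is to reproduce the same \emph{multiplicative} savings $d^{-sn}$ over brute force: every block of $b$ variables that the Boolean algorithm enumerates in $2^b$ steps is now enumerated in $d^b$ steps, and it suffices to show that the reduction to the Vector Domination Problem and its subquadratic solution still go through with this larger per-block cost.

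First I would check that each ingredient of the Boolean proof is insensitive to the alphabet size. The combinatorial bounds that control the structure of the circuit --- the number of ``heavy'' gates or high fan-out variables, the partition of the $cn$ wires, and so on --- are stated purely in terms of the wire budget $cn$ and the number of variables $n$, and are therefore unchanged when variables become $d$-ary. The one place requiring care is the behaviour of gates under a partial assignment: fixing a subset of $d$-ary variables in a $d$-ary symmetric (in particular, threshold) gate only shifts the effective threshold by the contribution $\sum_i w_i \alpha(x_i)$ of the fixed variables, so the restricted gate is again a symmetric gate on the remaining $d$-ary variables. Thus the restriction machinery carries over unchanged.

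Next I would adapt the reduction to the Vector Domination Problem. A partial $d$-ary assignment on the enumerated block is encoded as an integer vector whose coordinates record the partial sums feeding each relevant bottom-level gate; two such vectors dominate one another component-wise exactly when the corresponding completions are jointly consistent with all the threshold constraints, precisely as in the Boolean case. Crucially, the \emph{dimension} of these vectors is governed by the number of bottom-level gates, not by $d$, so it remains of the same order in the remaining parameters as before. Feeding the resulting $N = d^b$ vectors into the subquadratic Vector Domination algorithm requires only that the dimension stay below $c'\log N = c' b \log d$ for the relevant constant $c'$; since this threshold \emph{grows} with $d$, the $d$-ary case is if anything less constrained than the Boolean one.

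The main thing to verify --- and the only real obstacle --- is that the final savings $s = 1/c^{O(c^2)}$ is genuinely independent of $d$. This amounts to confirming that the block size $b$ and the vector dimension can be chosen exactly as in the Boolean argument: all the inequalities balancing the enumeration cost $d^b$ against the Vector Domination threshold scale uniformly in $\log d$, so the factor $\log d$ cancels and the exponent $s$ is determined entirely by $c$. Once this uniformity is checked, replacing $2$ by $d$ throughout yields the running time $\tilde{O}(d^{(1-s)n})$ with the same $s$, which is the claim.
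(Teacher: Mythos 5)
Your proposal is correct and takes essentially the same approach as the paper: the paper's entire proof of this corollary is the remark that the argument of \thmref{thm:Main} in Section~\ref{Fanin} extends to $d$-ary inputs, and your step-by-step verification (alphabet-independence of the wire-counting lemmas, restriction behaviour of threshold gates, dimension of the Vector Domination instance governed by the gate count, and savings uniform in $d$) is precisely that extension spelled out. The only place where ``verbatim'' needs a touch more care than you indicate is that a bottom-level gate left with a single surviving $d$-ary variable is a step function of that variable rather than a literal, so it cannot be absorbed into the top gate as a weighted direct wire as in the Boolean proof; instead its contribution to the top-level sum is simply evaluated during the enumeration of whichever half contains that variable, which your encoding of partial assignments as vectors already implicitly accommodates.
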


In the following, we provide a high level description of our
algorithm.  Intuitively, there are two extreme cases for the bottom
layer of a linear size threshold
circuits of depth two.  

The first extreme case is when we have a linear number of gates each
with bounded fan-in $k$.  This case is almost equivalent to $\maxksat$
and can be handled in a way similar to
\cite{CalabroImpagliazzoPaturi_2009_iwpec,AustrinBenabbasChattopadhyayPitassi_2012}.
Consider the family of $k$-sets of variables given by the support of
each bottom-level gate.  A probabilistic argument shows that, for some
constant $c$, there exists a subset of about $n-n/(ck)$ variables $U$
so that at most one element from each of the $k$-sets in the family is
outside of $U$.  Then for any assignment to the variables in $U$, each
bottom-level gate becomes either constant or a single literal, and the
top-level gate becomes a threshold function of the remaining inputs.
To check if a threshold function is satisfiable, we set each variable
according to the sign of its weight.

The second extreme case is when we have a relatively small number of
bottom-level gates, say, at most $\epsilon n$, but some of them  might have
a large fan-in.  In this case, we could first reduce the problem
to $\zeroone$ ILP
by guessing the truth value of all bottom-level gates and the top gate, 
and then verifying
the consistency of our guesses.  Each of
our guesses are threshold functions of the
variables, so testing consistency of our guesses is equivalent 
to testing whether 
the feasible region of about 
$\varepsilon n$ linear inequalities has a Boolean solution.  

We then reduce such an ILP to the Vector Domination problem.  To
do this, we partition the variables arbitrarily into two equal size
sets.  For each assignment to the first set, we compute a vector 
where the $i$'th component corresponds to the weighted
sum contributed by the first set of variables to the $i$'th threshold gate.
For the second set of
variables, we do the same, but subtract the contribution from the
threshold for the gate.  It is easy to see that
the vectors corresponding to a  satisfying assignment
are a dominating pair.  Since
there are $N=O(2^{n/2})$ vectors in our set, and each vector is of
dimension $d = \epsilon n = 2 \epsilon \log N$, to get constant
savings, we need a Vector Domination algorithm that is subquadratic
when the dimension is much less than the logarithm
of the number of vectors.
The last step is to give such an algorithm, using a simple but
delicate divide-and-conquer strategy.

Finally, to put these pieces together, we need to reduce the arbitrary
case to a ``convex combination'' of the two extreme cases mentioned
above. To do this, we use  the Fan-In Separation Lemma  
which asserts that there must be a
relatively small value of $k$ so that there are relatively few gates
of fan-in bigger than $k$ but less than $ck$, for some constant
$c$. 
We show that, as in the first extreme case, for a random
subset $U$ of variables, the gates with fan-in less or equal to $k$
almost entirely
simplify to constants or literals after setting the variables in
$U$.  
Our selection of $k$ ensures that the number of gates of fan-in greater than $k$
is small relative to the number of remaining variables.
So we can apply the method outlined for the second extreme case.  
The Fan-In Separation Lemma 
is where our savings becomes
exponentially small.  Unfortunately, this lemma is essentially tight,
so a new method of handling this step would be needed to make the
savings polynomially small.

Since the Integer Programming Feasibility problem with capacity
constraints can be expressed as a depth two threshold circuit with an
AND gate as the top-level gate, the results translate directly to the
feasibility version of sparse integer programs with capacity
constraints.  We get
\begin{corollary}
  Let $\{g_1 \geq a_1,\ldots,g_m \geq a_m\}$ be a collection of linear
  inequalities in variables $x_1,\ldots,x_n$ with total size at most
  $cn$.  There is an algorithm that finds an integer solution to the
  linear inequalities with capacity constraints $0 \leq x_i \leq d-1$
  for all $i$ in time $\tilde{O}\left(d^{\left(1 - s\right) n}\right)$
  for
        \begin{equation*}
                s = \frac1{c^{O(c^2)}}
        \end{equation*}
\end{corollary}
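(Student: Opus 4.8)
The plan is to realize the given system of linear inequalities as a $d$-ary depth two threshold circuit whose number of wires is at most $cn$, and then to invoke the corollary for depth two threshold circuits on $d$-ary variables verbatim. First I would treat each program variable $x_i$, subject to its capacity constraint $0 \leq x_i \leq d-1$, as a $d$-ary input variable of the circuit; since the $d$-ary circuit model already restricts every input to the set $\{0,1,\ldots,d-1\}$, the capacity constraints are enforced by the model itself and need not be encoded by any gate. For each inequality $g_j \geq a_j$, where $g_j(x_1,\ldots,x_n) = \sum_i w_{ji} x_i$, I would install one bottom-level threshold gate with weights $w_{ji}$ and threshold $a_j$; this gate outputs $1$ exactly when the $j$-th inequality is satisfied (the weights may be taken real, so integer coefficients are a harmless special case). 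Finally I would let the top-level gate be an AND of the bottom-level outputs, which is itself a threshold gate: give each bottom gate output weight $1$, give every direct wire weight $0$, and set the top threshold equal to the number of inequalities $m$, so that the circuit outputs $1$ iff all $m$ inequalities hold.

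Next I would check that the reduction preserves the parameters required by the corollary. The circuit has exactly $n$ variables. By the definition of the number of wires of a depth two circuit, the wire count equals the sum of the fan-ins of the bottom-level gates, which is $\sum_j \size(g_j)$, i.e.\ the total size of the system; by hypothesis this is at most $cn$. Hence the circuit is a $cn$-wire $d$-ary depth two threshold circuit, and a satisfying assignment of the circuit is, by construction, precisely an integer point $(x_1,\ldots,x_n)$ that meets every inequality together with every capacity constraint. Applying the $d$-ary corollary then yields such an assignment (or a report of infeasibility) in time $\tilde{O}\left(d^{(1-s)n}\right)$ with $s = 1/c^{O(c^2)}$, which is exactly the claimed bound.

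The only bookkeeping I expect to need concerns degenerate inequalities. An inequality with no nonzero coefficient contributes no wires and is either trivially true (discard it) or trivially false (report infeasibility at once); after this preprocessing every surviving bottom gate has positive fan-in, so the number of bottom gates is at most the wire count $cn$, consistent with the circuit model. I do not anticipate a genuine obstacle here: all of the algorithmic difficulty --- the Fan-In Separation Lemma, the reduction to the Vector Domination Problem, and the divide-and-conquer domination routine --- is already absorbed into the $d$-ary corollary, and the work of this statement is confined to verifying that the reduction is faithful and exactly size-preserving.
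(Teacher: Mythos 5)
Your proposal is correct and takes essentially the same route as the paper: the paper's entire justification is the one-line observation that an ILP with capacity constraints $0 \leq x_i \leq d-1$ is exactly the satisfiability problem for a $d$-ary depth two threshold circuit with one bottom-level gate per inequality and an AND (itself a threshold gate) on top, after which the $d$-ary corollary applies. Your version merely spells out the wire-count bookkeeping that the paper leaves implicit.
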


The following two sections contain the details of the proof.  Section
\ref{Domination} introduces the \emph{Vector Domination problem} and,
for small dimension, gives an algorithm faster than the trivial
quadratic time. The feasibility of a $\zeroone$ ILP with a small
number of inequalities is then reduced to the Vector Domination problem, 
yielding
an algorithm for such $\zeroone$ ILP with constant savings.
A reduction from
depth two threshold circuits to $\zeroone$ ILP 
concludes that
section. In Section \ref{Fanin}, we show how to  reduce 
the $cn$-wire depth two threshold
circuits 
satisfiability problem 
to the special case with a small number of  bottom-level gates relative
to the number of variables.  
The remaining sections discuss generalizations of our result.

\section{Vector Domination Problem}
\label{Domination}
In this section we introduce the \emph{Vector Domination} problem and
give an algorithm faster than the trivial $O(n^2)$ for small
dimension.

\begin{definition}
  Given two sets of $d$-dimensional real vectors $A$ and $B$, the
  \emph{Vector Domination Problem} is the problem of finding two
  vectors $u \in A$ and $v\in B$ such that $u_i \geq v_i$ for all
  $1 \leq i \leq d$.
\end{definition}

\begin{lemma}
  Let $d \in \nats$ and $A, B \subseteq \reals^d$ with $|A| + |B| =
  n$. There is an algorithm for the Vector Domination problem that runs in time
\begin{equation*}
O\left( \binom{d + \log n + 2}{d+1} n\right) 
\end{equation*} 
\end{lemma}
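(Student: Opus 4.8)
The plan is to solve the problem by divide-and-conquer, recursing simultaneously on the number of vectors and on the dimension. Write $T(n,d)$ for the running time on an instance with $n = |A| + |B|$ vectors in $\reals^d$. The base cases are immediate: if $d = 0$ the domination condition is vacuous, so any $u \in A$, $v \in B$ is a valid pair and we succeed iff both sets are nonempty; and if $n = 1$ no pair can exist. Otherwise I would compute the median $m$ of the first coordinates of all $n$ vectors (in $O(n)$ time) and partition them into a ``low'' part $L$ of vectors with first coordinate $< m$ and a ``high'' part $H$ with first coordinate $\geq m$, each of size $n/2$ after breaking ties so as to balance the split.

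The key observation is how dominating pairs interact with this split, since any pair $u \in A$, $v \in B$ must satisfy $u_1 \geq v_1$. If $u$ and $v$ lie in the same part, the pair is found by recursing on that part in the full dimension $d$. If $u \in A \cap H$ and $v \in B \cap L$, then $u_1 \geq m > v_1$, so the first coordinate is automatically dominated and it only remains to test domination in coordinates $2, \dots, d$; this is a $(d-1)$-dimensional instance on $A \cap H$ and $B \cap L$. The remaining case, $u \in A \cap L$ and $v \in B \cap H$, forces $u_1 < m \leq v_1$ and is discarded. Thus three recursive calls --- two of size $n/2$ in dimension $d$, and one of size at most $n$ in dimension $d-1$ --- together capture exactly the dominating pairs, establishing correctness and yielding
\begin{equation*}
T(n,d) \leq 2\,T(n/2, d) + T(n, d-1) + O(n).
\end{equation*}

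The delicate point, and the main obstacle, is that the cross subproblem does not shrink $n$: its only progress is to lower the dimension, so the usual ``halving'' intuition does not directly bound the cost. To solve the recurrence I would substitute $n = 2^k$ and set $g(d,k) = T(2^k, d)/2^k$, which converts it into the Pascal-like form
\begin{equation*}
g(d,k) \leq g(d-1, k) + g(d, k-1) + O(1),
\end{equation*}
with $g(d,0) = g(0,k) = O(1)$. Unrolling this over the grid $\{0,\dots,d\}\times\{0,\dots,k\}$, the $O(1)$ contributed at cell $(i,j)$ is weighted by the number of monotone lattice paths from $(i,j)$ to $(d,k)$, namely $\binom{(d-i)+(k-j)}{d-i}$, so $g(d,k) = O\bigl(\sum_{i\leq d,\, j\leq k} \binom{(d-i)+(k-j)}{d-i}\bigr)$. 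Two successive applications of the hockey-stick identity collapse this double sum to $\binom{d+k+2}{d+1}$; indeed it is precisely these two summation steps that account for the ``$+2$'' appearing in the statement. Substituting $k = \log n$ and multiplying back by $n = 2^k$ gives $T(n,d) = O\bigl(\binom{d+\log n+2}{d+1}\,n\bigr)$, as claimed. Everything else --- median finding, partitioning, and tie-breaking to keep the two halves balanced --- is routine linear-time bookkeeping absorbed into the $O(n)$ term.
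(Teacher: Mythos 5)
Your overall strategy --- median split on the first coordinate, the recurrence $T(n,d) \leq 2T(n/2,d) + T(n,d-1) + O(n)$, and its solution by lattice-path counting plus the hockey-stick identity --- is exactly the paper's, and your solution of the recurrence is correct. But there is a genuine gap in the partitioning step: your handling of ties at the median breaks correctness. You define $L = \{u : u_1 < m\}$ and $H = \{u : u_1 \geq m\}$ but then ``break ties so as to balance the split,'' which means some vectors with first coordinate exactly $m$ can end up in $L$. Once that happens, your claim that the discarded case $u \in A \cap L$, $v \in B \cap H$ forces $u_1 < m \leq v_1$ is false: you can have $u_1 = v_1 = m$ with $u$ tie-broken into $L$ and $v$ tie-broken into $H$, and such a dominating pair is found by none of your three recursive calls. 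Concretely, take $A = \{(5,1)\}$, $B = \{(5,0)\}$; the median of the first coordinates is $5$, and if the balancing rule puts the $A$-vector in $L$ and the $B$-vector in $H$, the algorithm reports no pair even though $(5,1)$ dominates $(5,0)$.

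The problem cannot be patched by a cleverer two-way tie-breaking rule: if all tied vectors go to $H$ (or all to $L$, or $A$-ties up and $B$-ties down), correctness is restored but the size bound fails --- when all $n$ first coordinates are equal, one side receives everything, the subproblem has size $n$ in the same dimension, and the recursion makes no progress. This is precisely why the paper uses a \emph{three-way} partition $A^+, A^=, A^-$ (and likewise for $B$): the two strictly-separated cases $A^+ \times B^+$ and $A^- \times B^-$ each have total size at most $n/2$ because they exclude the median-valued vectors, while every pair involving a tie is funneled into the dimension-reducing case $(A^= \cup A^+) \times (B^= \cup B^-)$, where only $u_1 \geq v_1$ is needed. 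With that one change your argument, including your normalization $g(d,k) = T(2^k,d)/2^k$ and the double hockey-stick summation giving $\binom{d+\log n+2}{d+1}$, goes through verbatim.
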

\begin{proof}
  The claim is trivial for $n=1$ or $d=1$. In the latter case we can
  sort the set $A \cup B$ and then decide if such a pair exists in
  linear time.

  Otherwise, let $a$ be the median of the first coordinates of $A \cup
  B$. We partition the set $A$ into three sets $A^+$, $A^=$ and $A^-$,
  where $A^+$ contains all vectors $u \in A$ such that $u_1 > a$,
  $A^=$ contains all vectors such that $u_1 = a$ and $A^-$ contains
  all vectors such that $u_1 < a$. We further partition $B$ into set
  $B^+$, $B^=$ and $B^-$ in the same way. A vector $u \in A$ can only
  dominate a vector $v \in B$ in one of three cases:
  \begin{enumerate}
  \item $u \in A^+$ and $v\in B^+$
  \item $u \in A^-$ and $v\in B^-$
  \item $u \in A^= \cup A^+$ and $v \in B^= \cup B^-$ 
  \end{enumerate}

  For the first two cases we have $|A^+| + |B^+| \leq \frac{n}2$ and
  $|A^-| + |B^-| \leq \frac{n}2$ as we split at the median. For the
  third case, we know $u_1 \geq v_1$, hence we can recurse on vectors
  of dimension $d-1$. Since finding the median takes time $O(n)$ we
  get for the running time of $n$ vectors of dimension $d$
  \begin{equation*}
    T(n,d) = 2T\left(\frac{n}2,d\right) + T(n,d-1) + O(n)
  \end{equation*}

  To solve this recurrence relation, we want to count the number of
  nodes in the recurrence tree with $n'=\frac{n}{2^i}$ and
  $d'=d-j$. There are $\binom{i+j}{j}2^i$ possible paths from the root
  node to such a node, as in every step we either decrease $n$ or $d$,
  and there are $\binom{i+j}{j}$ possible combinations to do so, and
  if we decrease $n$ there are two possible children. Since computing
  the median of $\frac{n}{2^i}$ numbers takes time $O(\frac{n}{2^i})$
  the total time is upper bounded by
  \begin{align*}
    \sum_{\substack{0\leq i \leq \log n \\ 0 \leq j \leq d}}
    \binom{i+j}{j}2^i O\left(\frac{n}{2^i}\right) &= \sum_{0\leq i \leq \log n}
    \binom{i+d+1}{d} O(n) \\ &= \left(\binom{\log n +d+2}{d+1} -
      \binom{d+1}{d+1}\right) O(n) \\ &= \left(\binom{d + \log n
      + 2}{d + 1} - 1\right) O(n) 
  \end{align*}
\end{proof}

We can reduce $\zeroone$ ILP with few inequalities to the Vector
Domination Problem.

\begin{corollary}
  \label{lem:ilp}
  Consider a $\zeroone$ Integer Linear Program on $n$ variables and $\delta n$
  inequalities for some $\delta > 0$. Then we can find a solution in
  time
  \begin{equation*}
    2^{n/2} \binom{(1/2+\delta)n}{\delta n} \text{poly}(n)
    \leq 2^{(1/2 + \delta (\log(e) + \log\left(1 + \frac{1}{2\delta}\right)) n} \text{poly}(n)
  \end{equation*}

  Note that this algorithm is faster than $2^n$ for $\delta < 0.136$. 
\end{corollary}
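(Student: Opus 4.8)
The plan is to reduce the given $\zeroone$ ILP to a single instance of the Vector Domination problem and then invoke the preceding lemma. First I would split the $n$ variables arbitrarily into two halves $Y$ and $Z$ of size $n/2$ each. Writing the $j$-th inequality as $\sum_i w_{j,i} x_i \geq a_j$, I split the weighted sum into the part supported on $Y$ and the part supported on $Z$, so that the constraint becomes $\sum_{i \in Y} w_{j,i} x_i \geq a_j - \sum_{i \in Z} w_{j,i} x_i$. This rearrangement is the crux of the reduction: it isolates all dependence on the $Y$-variables on the left and all dependence on the $Z$-variables on the right.

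Next I would build the two vector sets. For each of the $2^{n/2}$ assignments $y$ to $Y$, form the vector $u(y) \in \reals^{\delta n}$ whose $j$-th coordinate is $\sum_{i \in Y} w_{j,i} y_i$, and collect these into $A$. Symmetrically, for each assignment $z$ to $Z$, form $v(z)$ whose $j$-th coordinate is $a_j - \sum_{i \in Z} w_{j,i} z_i$, and collect these into $B$. By the rearrangement above, the combined assignment $(y,z)$ satisfies all $\delta n$ inequalities if and only if $u(y)_j \geq v(z)_j$ for every $j$, i.e. if and only if $u(y)$ dominates $v(z)$. Thus a feasible point of the ILP exists exactly when $(A,B)$ has a dominating pair, and such a pair directly recovers the solution.

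Both $A$ and $B$ have $2^{n/2}$ vectors, so $N = |A| + |B| = O(2^{n/2})$ and the dimension is $d = \delta n$. Feeding this into the lemma gives running time $O\!\left(\binom{d + \log N + 2}{d+1} N\right)$, which with $\log N = n/2 + O(1)$ is $\binom{(1/2+\delta)n}{\delta n} 2^{n/2} \text{poly}(n)$, matching the first expression. To obtain the clean exponential bound I would apply the standard estimate $\binom{m}{k} \leq (em/k)^k$ with $m = (1/2+\delta)n$ and $k = \delta n$, so that $em/k = e\left(1 + \tfrac{1}{2\delta}\right)$ and the binomial is at most $2^{\delta(\log(e) + \log(1 + 1/(2\delta)))n}$; multiplying by $2^{n/2}$ yields the claimed second expression. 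The final remark follows by solving $\tfrac12 + \delta\left(\log(e) + \log\left(1 + \tfrac{1}{2\delta}\right)\right) < 1$ numerically, which holds for $\delta < 0.136$.

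I do not expect a genuine obstacle here: every step is a routine rearrangement or a standard binomial estimate, and the heavy lifting has already been done in the Vector Domination lemma. The one point that requires care is fixing the direction of domination correctly --- in particular subtracting the $Z$-contribution from the threshold $a_j$ rather than from the $Y$-contribution --- since an error there would swap the roles of $A$ and $B$ or flip an inequality. Beyond that, one should only confirm that the $O(1)$ slack in $\log N$ and the additive constants in the binomial's top argument are absorbed into the $\text{poly}(n)$ factor.
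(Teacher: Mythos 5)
Your proposal is correct and follows essentially the same route as the paper: the same split of the variables into two halves, the same construction of the two vector sets with the thresholds folded into the second set, and the same invocation of the Vector Domination lemma with $N = O(2^{n/2})$ and $d = \delta n$. The only addition is that you spell out the standard estimate $\binom{m}{k} \leq (em/k)^k$ justifying the displayed inequality, which the paper leaves implicit.
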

\begin{proof}
  Separate the variable set into two sets $S_1$ and $S_2$ of equal
  size. We assign every assignment to the variables in $S_1$ and $S_2$
  a $\delta n$-dimensional vector where every dimension corresponds to
  an inequality. Let $\alpha$ be an assignment to $S_1$ and let
  $\sum_{i=1}^n w_{i,j}x_i \geq t_j$ be the $j$-th inequality for all
  $j$. Let $a \in \reals^{\delta n}$ be the vector with $a_j =
  \sum_{x_i\in S_1} w_{i,j} \alpha(x_i)$ and let $A$ be the set of
  $2^{n/2}$ such vectors. For an assignment $\beta$ to $S_2$, let $b$
  be the vector with $b_j = t_j - \sum_{x_i \in S_2}
  w_{i,j}x_i(\beta)$ and let $B$ be the set of all such vectors $b$.

  An assignment to all variables corresponds to an assignment to $S_1$
  and an assignment to $S_2$, and hence to a pair $a \in A$ and $b \in
  B$. The pair satisfies all inequalities if and only if $a$ dominates
  $b$. Since $|A|+|B| = 2^{n/2 +1}$ and the dimension is $\delta n$,
  we can solve the domination problem in time
  \begin{equation*}
    O\left( \binom{n/2 + \delta n + 3}{\delta n + 1} 2^{n/2+1}\right)
  \end{equation*}
\end{proof}

We now reduce the satisfiability of a depth two threshold circuit with
$\delta n$ bottom-level gates and any number of direct wires to the
union of $2^{\delta n}$ ILP problems.

\begin{corollary}
  \label{lem:algorithm}
  Consider a depth two threshold circuit on $n$ variables and $\delta n$
  bottom-level gates for some $\delta > 0$. We allow an arbitrary
  number of direct wires to the top-level gate. Then there is a
  satisfiability algorithm that runs in time
  \begin{equation*}
    2^{\delta n} 2^{n/2} \binom{(1/2+\delta)n}{\delta n} \text{poly}(n)
    \leq 2^{(1/2 + \delta (\log(e) + \log(1 + 1/2\delta)+1)) n} \text{poly}(n)
  \end{equation*}

  Note that this algorithm is faster than $2^n$ for $\delta < 0.099$. 
\end{corollary}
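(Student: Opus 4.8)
The plan is to reduce satisfiability of the circuit to a union of $2^{\delta n}$ instances of $\zeroone$ ILP, each with about $\delta n$ inequalities, and then invoke Corollary~\ref{lem:ilp} on each. First I would enumerate all $2^{\delta n}$ possible guesses for the output values of the $\delta n$ bottom-level gates. The key observation is that each guess turns the circuit into a system of linear inequalities over the input variables: if we guess that a bottom gate computing $\sum_i w_{i,j} x_i \geq t_j$ outputs $1$, we impose $\sum_i w_{i,j} x_i \geq t_j$; if we guess it outputs $0$, we impose $\sum_i w_{i,j} x_i < t_j$. Since the inputs are Boolean, each linear form $\sum_i w_{i,j} x_i$ attains only finitely many values, so a strict inequality can be rewritten as a non-strict one by shifting the threshold by less than the minimum gap between distinct attainable values and flipping signs, keeping everything in the $\geq$ form required by Corollary~\ref{lem:ilp}.

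Second, I would handle the top-level gate. Once the bottom-level outputs are fixed by the guess, the contribution of the bottom gates to the top gate is a known constant, and the top gate reduces to a single threshold inequality in the direct-wire variables. Requiring the top gate to evaluate to $1$ therefore contributes one additional linear inequality. Combining this with the $\delta n$ consistency constraints gives, for each guess, a $\zeroone$ ILP on $n$ variables with $\delta n + 1$ inequalities, and the circuit is satisfiable if and only if at least one of the $2^{\delta n}$ ILP instances is feasible; moreover a feasible point is a satisfying assignment, so the algorithm also recovers a witness.

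Third, applying Corollary~\ref{lem:ilp} to each instance (with $\delta n + 1 = \delta n + O(1)$ inequalities, which affects the bound only by a $\text{poly}(n)$ factor) costs $2^{n/2}\binom{(1/2+\delta)n}{\delta n}\text{poly}(n)$ per guess. Multiplying by the $2^{\delta n}$ guesses yields the claimed running time $2^{\delta n}2^{n/2}\binom{(1/2+\delta)n}{\delta n}\text{poly}(n)$, and the closed-form upper bound follows from the standard estimate $\binom{(1/2+\delta)n}{\delta n} \leq 2^{\delta n(\log e + \log(1 + 1/(2\delta)))}$; the extra factor $2^{\delta n}$ from enumerating the guesses is exactly the additional ``$+1$'' in the exponent relative to Corollary~\ref{lem:ilp}. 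The stated threshold $\delta < 0.099$ is then a routine numerical solution of $1/2 + \delta(\log e + \log(1 + 1/(2\delta)) + 1) < 1$.

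The step I expect to require the most care is converting the ``gate outputs $0$'' guesses into the non-strict $\geq$ inequalities that Corollary~\ref{lem:ilp} and the underlying Vector Domination reduction require, since this relies on the finiteness of the attainable values of each linear form rather than on any bound on the weights. Everything else is a direct composition of counting the guesses with the previously established ILP algorithm.
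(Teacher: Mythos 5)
Your proposal is correct and follows essentially the same route as the paper: guess the $2^{\delta n}$ output patterns of the bottom-level gates, encode consistency of each guess plus the top gate as a $\zeroone$ ILP with $\delta n+1$ inequalities, and invoke Corollary~\ref{lem:ilp}. Your handling of the strict inequalities (shifting by less than the minimum gap between attainable values of the linear form) is in fact slightly more careful than the paper's shift by $\min_i w_i$, which is stated loosely for arbitrary real weights.
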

\begin{proof}
  For every subset $U$ of bottom-level gates, we solve the
  satisfiability problem under the condition that only the
  bottom-level gates of $U$ are satisfied. For an assignment to
  satisfy both the circuit and the condition that only gates in $U$
  are satisfied, it must satisfy the following system of inequalities:
  \begin{enumerate}
  \item For gates in $U$ with weights $w_1,\ldots,w_n$ and threshold
    $t$, we have $\sum_{i=1}^n w_i x_i \geq t$.
  \item For gates not in $U$ we require $\sum_{i=1}^n w_i x_i < t$,
    which is equivalent to $\sum_{i=1}^n -w_i x_i \geq -t
    + \min_i{w_i}$.
  \item Let $v_1,\ldots,v_n$ be the weights of the direct wires and
    let $s$ be the threshold of the top-level gate. Further let $w_U$
    be the sum of the weights of the gates in $U$. Then $\sum_{i=1}^n
    v_i x_i \geq s - w_U$.
  \end{enumerate}
  Note that this system contains $\delta n + 1$ inequalities, and the
  additional dimension adds only a polynomial factor to the time. 

  Since we need to solve a system of inequalities for every possible
  subset of bottom-level gates to be satisfied, we have an additional
  factor of $2^{\delta n}$, which gives the running time as claimed.
\end{proof}

Williams \cite{Williams_2005_tcs} 
introduced the  reductions used in  
Corollaries \ref{lem:ilp} and \ref{lem:algorithm}.
He considered a special case of the  
Vector Domination problem (called the Cooperative Subset Query problem)
where the entries in the vectors are 0 and 1 instead of 
arbitrary real numbers
. Applying the reduction from Corollary \ref{lem:ilp} to $\cnfsat$,
he concludes that an algorithm for solving the Cooperative Subset Query
problem with $d=\omega(\log n)$
that runs in time $O(f(d)n^\delta)$ for some $\delta < 2$ and a
time-constructible $f$ gives a $\cnfsat$ 
algorithm in time
$O(f(m)2^{(\delta/2) n})$ where $m$ is the number of clauses. Our
algorithm only works for $d < 0.136 \log n$, so it would be
interesting to see how far this can be pushed.

\section{Fan-In Separation}
\label{Fanin}
In this section we reduce the satisfiability of a depth two threshold
circuit with $cn$ wires to depth two threshold circuits with at most
$\delta n$ bottom-level gates by considering all possible assignments to
a random subset $U$ of
variables.  
The goal of the restriction is to eliminate all but a small fraction
of gates.
$U$ will consist of all but a fraction $O(1/(ck))$ of the variables
where $k$ is chosen such 
that there are only a small number of gates of fan-in larger than $k$
relative to the number of 
remaining variables.
Fan-In Separation Lemma shows how to find such a $k$.
 
\begin{lemma}[Fan-In Separation Lemma]
  Let $\mathcal{F}$ be a family of sets such that $\sum_{F \in
    \mathcal{F}} |F| \leq cn$. Further let $a> 1$ and $\epsilon>0$ be
  parameters. There is an $k \leq a^{c/\epsilon}$ such that
  \begin{equation*}
    \sum_{\substack{F \in \mathcal{F} \\ k < |F| \leq ka}}|F| \leq
    \epsilon n
  \end{equation*}
\end{lemma}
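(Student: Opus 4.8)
The plan is to prove this by a simple averaging (pigeonhole) argument over geometrically spaced thresholds. I would restrict attention to the candidate values $k = a^0, a^1, \ldots, a^{L-1}$, where $L = \lceil c/\epsilon \rceil$, and consider the corresponding half-open bands of set sizes
\begin{equation*}
 I_j = \{\, F \in \mathcal{F} : a^j < |F| \leq a^{j+1} \,\}, \qquad j = 0, 1, \ldots, L-1.
\end{equation*}
The point of choosing the bands $(a^j, a^{j+1}]$ is precisely that setting $k = a^j$ reproduces the range $k < |F| \leq ka$ appearing in the statement, so finding a light band is the same as finding a good threshold.

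First I would observe that these $L$ bands are pairwise disjoint as intervals of sizes, so every set $F \in \mathcal{F}$ contributes its size $|F|$ to at most one band. Consequently the total weight spread across all $L$ bands is bounded by the total weight of the whole family:
\begin{equation*}
 \sum_{j=0}^{L-1} \sum_{F \in I_j} |F| \;\leq\; \sum_{F \in \mathcal{F}} |F| \;\leq\; cn.
\end{equation*}
By averaging, at least one band $I_{j^\ast}$ carries weight at most $cn/L$. Since $L = \lceil c/\epsilon \rceil \geq c/\epsilon$, this gives $\sum_{F \in I_{j^\ast}} |F| \leq cn/L \leq \epsilon n$, which is exactly the desired inequality for the threshold $k = a^{j^\ast}$.

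It then remains to check that $k$ stays within the claimed bound. The chosen exponent satisfies $j^\ast \in \{0, \ldots, L-1\}$, so $k = a^{j^\ast} \leq a^{L-1} = a^{\lceil c/\epsilon\rceil - 1} \leq a^{c/\epsilon}$, using the elementary fact that $\lceil x \rceil - 1 \leq x$. There is essentially no hard step in this argument: it is a one-line averaging bound once the bands are set up. The only things to be careful about are (i) using half-open intervals so that each set is counted in at most one band (otherwise overlapping bands would inflate the left-hand side), and (ii) the ceiling bookkeeping that keeps $k$ below $a^{c/\epsilon}$. I would also note, consistent with the remark in the introduction that the lemma is ``essentially tight,'' that a family concentrated so as to make every geometric band roughly equal in weight forces the $c/\epsilon$ (and hence exponential) dependence of $k$ on the parameters, so no substantially better threshold bound is available from this counting alone.
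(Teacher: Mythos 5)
Your proof is correct and is essentially the same argument as the paper's: both partition the sizes into the geometrically spaced bands $(a^j, a^{j+1}]$ and apply pigeonhole to the total weight bound $cn$, the only difference being that you phrase it as an averaging bound while the paper phrases it as a proof by contradiction. The ceiling bookkeeping for $k \leq a^{c/\epsilon}$ is also fine.
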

\begin{proof}
  Assume otherwise for the sake of contradiction. For $0 \leq i \leq
  \frac{c}{\epsilon}$, let $f_i$ be the sum of $|F|$ where $a^i <
  |F| \leq a^{i+1}$. By assumption we have $f_i > \epsilon n$ for all
  $i$. Hence $\sum_{i=0}^{c/\epsilon} f_i > cn$, which is a
  contradiction.
\end{proof}

\begin{lemma}
  Consider a depth two threshold circuit with $n$ variables and $cn$
  wires. Let $\delta >0 $ and let $U$ be a random set of variables such that
  each variable is in $U$ with some probability $p$
  independently. There exists a $p = \frac1{c^{O(c^2)}}$ such that
  the expected number of bottom-level gates that depend on at least
  two variables not in $U$ is at most $3\delta p n$.
\end{lemma}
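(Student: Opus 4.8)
The plan is to sort the bottom-level gates by fan-in and to bound, band by band, the expected number that fail to simplify, invoking the Fan-In Separation Lemma to guarantee that a problematic middle band of fan-ins carries only negligible total weight. Throughout, call a gate \emph{surviving} if at least two of its input variables are left free (lie outside $U$); these are exactly the gates the statement counts. Since each variable is left free independently with probability $p$, a bottom gate of fan-in $f$ survives with probability at most $\binom{f}{2}p^2 \le \tfrac12 f^2 p^2$.

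First I would apply the Fan-In Separation Lemma to the family $\mathcal{F}$ of supports of the bottom-level gates, whose total size is the number of wires $cn$, with parameters $a$ and $\epsilon$ to be pinned down at the end. This yields a threshold $k \le a^{c/\epsilon}$ such that the gates of fan-in in the band $(k, ka]$ have total fan-in at most $\epsilon n$. I then split the gates into three groups, \emph{small} gates of fan-in at most $k$, \emph{medium} gates of fan-in in $(k, ka]$, and \emph{large} gates of fan-in exceeding $ka$, and set the free-probability to $p = 2\delta/(kc)$.

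The central computation is to show that each group contributes at most $\delta p n$ to the expected number of survivors. For the small gates I would sum $\tfrac12 f^2 p^2$ over the group, bound $f^2 \le k f$ (valid since $f \le k$), and use $\sum_{F}|F| \le cn$, obtaining at most $\tfrac12 p^2 k c n = \delta p n$ by the choice $p k c = 2\delta$. For the medium and large gates the per-gate survival probability is useless (large fan-in makes survival likely), so I would instead just count the gates, which suffices because each survives with probability at most one: the medium gates number fewer than $\epsilon n / k$ (total weight $\le \epsilon n$, each of fan-in exceeding $k$) and the large gates number fewer than $cn/(ka)$ (total weight $\le cn$, each of fan-in exceeding $ka$). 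Summing the three bounds gives $3\delta p n$.

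Finally I would choose the parameters so that the medium and large counts also meet the $\delta p n = \tfrac{2\delta^2}{kc}n$ target: taking $\epsilon = 2\delta^2/c$ forces $\epsilon n/k \le \delta p n$, and taking $a = c^2/(2\delta^2)$ forces $cn/(ka) \le \delta p n$. With these choices $k \le a^{c/\epsilon} = \left(c^2/(2\delta^2)\right)^{c^2/(2\delta^2)}$, which is $c^{O(c^2)}$ for constant $\delta$, so $p = 2\delta/(kc) = 1/c^{O(c^2)}$ as claimed. The delicate point is exactly this balancing act: the middle band must be made light enough that its gate count is negligible, yet $k$ must stay small so that $p$, and hence the eventual savings, is not driven to zero. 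This is precisely where the Fan-In Separation Lemma enters, and since that lemma is essentially tight, it is the step that forces $p$ to be exponentially small in $c$.
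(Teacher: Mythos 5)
Your proposal is correct and follows essentially the same route as the paper's own proof: the same three-way split into small, medium, and large gates via the Fan-In Separation Lemma, the same second-moment bound $\binom{f}{2}p^2$ for small gates, the same wire-counting bounds for the medium and large bands, and the same balancing of $\epsilon$, $a$, and $p$ (your constants differ by harmless factors of $2$). You also correctly read $p$ as the probability of a variable being left \emph{outside} $U$, which is the convention the paper's proof and its later application actually use.
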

\begin{proof}
  Let $\epsilon = \frac{\delta^2}c$ and $a = \frac{c^2}{\delta^2}$ and
  let $k$ be the smallest value such that there are at most $\epsilon
  n$ wires as inputs to gates with fan-in between $k$ and
  $ka$. Further let $p = \frac{\delta}{ck}$.

  Using the Fan-In Separation Lemma we get $k \leq
  \left(\frac{c^2}{\delta^2}\right)^{c^2/\delta^2}$.  We distinguish
  three types of bottom-level gates: Small gates, with fan-in at most
  $k$, medium gates with fan-in between $k$ and $ka$, and large gates
  with fan-in at least $ka$. For each type of gates,
we argue that the expected number of
  gates that depend on at least two variables not in $U$
  is bounded by $\delta p n$.

  For medium gates, the total number of wires is bounded by
  $\frac{\delta^2}c n$ and each gate contains at least $k$
  wires. Hence the number of medium gates is bounded by $\frac{\delta}{ck}
  \delta n = \delta p n$.

  Large gates contain at least $ka$ wires, hence the number of large
  gates is bounded by $\frac{c}{ka}n=\frac{\delta}{ck}\delta n=
  \delta p n$.

  For small gates, we argue as follows. 
 Let $m$ be the number of small gates and let
  $l_1,\ldots, l_m$ be their fan-ins. Let $X_i$ denote the event that
  gate $i$ depends on at least two variables not in $U$ and let $X$ be
  the number of such events. We have $P(X_i) \leq
  \binom{l_i}{2} p^2 \leq l_i^2p^2$ and therefore
  \begin{equation*}
    E[X] = \sum_{i=1}^m P(X_i) \leq \sum_{i=1}^m l_i^2p^2 \leq p^2 kcn
    = \delta p n
  \end{equation*}
\end{proof}

\begin{lemma}
\label{lemma:entire}
  There is a satisfiability algorithm for depth two threshold circuits
  with $cn$ wires that runs in time $2^{(1-s)n}$ for
  $E[s]=\frac1{c^{O(c^2)}}$.
\end{lemma}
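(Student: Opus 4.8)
The plan is to reduce to the few-gates case of Corollary~\ref{lem:algorithm} by a random restriction that keeps only a small \emph{free} set of variables and brute-forces the rest. Fix a small constant $\delta > 0$ (pinned down below), and let $U$ be the random set furnished by the preceding lemma, so that each variable lies in $U$ independently with probability $p = \frac1{c^{O(c^2)}}$ and the expected number of bottom-level gates depending on at least two variables of $U$ is at most $3\delta p n$. I treat the variables of $U$ as free and enumerate all $2^{|\setcomp U|}$ assignments to the remaining variables $\setcomp U$; for each such assignment I solve the residual satisfiability problem on the free variables.

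First I would describe how the circuit collapses once $\setcomp U$ is fixed. A bottom-level gate with at most one input in $U$ becomes trivial: with no free input it is a constant, which folds into the threshold of the top gate, and with exactly one free input it computes a literal, which feeds the top gate and is therefore indistinguishable from a direct wire on that free variable (up to a sign and an additive constant that is again absorbed into the top threshold). Likewise the original direct wires split into those on $\setcomp U$, which join the top threshold, and those on $U$, which survive as direct wires. Hence the only genuine bottom-level gates that remain are exactly those depending on at least two free variables, and what is left is a depth two threshold circuit on the free variables with an arbitrary number of direct wires — precisely the form handled by Corollary~\ref{lem:algorithm}.

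By the preceding lemma the expected number of surviving bottom-level gates is at most $3\delta p n$, and since $|U|$ concentrates around $pn$ this is a $\delta'$-fraction of the free variables with $E[\delta'] \le 3\delta$. Choosing $\delta$ to be a fixed constant with $3\delta < 0.099$ ensures that Corollary~\ref{lem:algorithm} solves each residual instance on $|U| \approx pn$ variables with a positive constant savings $s' = s'(\delta') > 0$, i.e.\ in time $2^{(1-s')pn}\,\mathrm{poly}(n)$. Multiplying by the $2^{(1-p)n}$ enumerated assignments gives total time
\begin{equation*}
2^{(1-p)n} \cdot 2^{(1-s')pn}\,\mathrm{poly}(n) = 2^{(1 - s'p)n}\,\mathrm{poly}(n),
\end{equation*}
so the savings is $s = s' p$. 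Since $s'$ is a fixed positive constant determined by $\delta$ alone while $p = \frac1{c^{O(c^2)}}$, this yields $E[s] = p\,E[s'] = \frac1{c^{O(c^2)}}$ as claimed.

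The delicate point is the passage from the \emph{expected} number of surviving gates to a usable statement about the savings: because the running time of Corollary~\ref{lem:algorithm} is exponential in $\delta'$, one cannot simply substitute $E[\delta']$ into the exponent, and I expect the main care to go here. The cleanest route is to work with $E[s]$ directly — the clamped savings $\max(0, s'(\delta'))$ is convex and nonincreasing in $\delta'$, so Jensen's inequality together with $E[\delta'] \le 3\delta$ gives $E[s] \ge p\,s'(3\delta) > 0$; alternatively, since both $|U|$ and the number of surviving gates are checkable in polynomial time, one may resample $U$ using Markov's inequality until the count is at most a constant multiple of its expectation and then apply the corollary with a worst-case guarantee.
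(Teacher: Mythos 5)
Your proposal matches the paper's proof essentially step for step: restrict all but an expected $pn$ free variables, note that only the expected $3\delta pn$ gates with at least two free inputs survive (everything else folding into constants or direct wires to the top gate), apply Corollary~\ref{lem:algorithm} with $3\delta < 0.099$, and multiply to get savings $s'p = 1/c^{O(c^2)}$. The subtlety you flag about substituting expectations into an exponent is real, but the paper itself handles it no more carefully (it simply plugs expected quantities into $E[\log T]$ and defers the expectation-to-realization issue to a Markov-style repetition after the lemma), so your Jensen/resampling remarks only add rigor to the same argument.
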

\begin{proof}
  Let $\delta = \frac1{48}$ and $U$ as well as other parameters be as
  above. For every assignment to $U$,
  we have a depth two threshold
  circuit with $pn$ variables and $3\delta pn$ bottom-level gates in
  expectation. Since $3\delta = \frac1{16} < 0.099$, we can decide 
  the satisfiability of such
  a circuit using Corollary \ref{lem:algorithm} with constant
  savings. Let $s' = 1/2 - 3\delta (\log(e) + \log(1 + 1/6\delta)+1)
  \approx 0.15$ be the savings with our parameters.

  Let $T$ be the time for carrying out the entire procedure.  
  Since we are interested in the
  expected savings we consider the logarithm of the time and get
  \begin{equation*}
    E[\log(T)] = (1-p)n + (1 - s') pn = (1- s'p) n
  \end{equation*}
  and the lemma follows from $p = \frac1{c^{O(c^2)}}$. 
\end{proof}

Since $s$ is bounded by above by $1$, we can repeat the process a
constant number of times until we find a restriction such that the
savings is at least its expectation. This gives us our main result
Theorem \ref{thm:Main}.

\section{Generalization to Formulas}
\label{Generalizations}
In this section we discuss an extension of our main result to linear
size, constant depth \emph{threshold formulas}. A formula is a circuit
such that the output of every gate is an input to at most one other
gate.  A formula can be viewed as a tree where the internal nodes
correspond to gates and the leaves to bottom variables.  Note that a
circuit of depth two is always a formula.  The proof is a direct
generalization of our main proof.
\begin{corollary}
  There is a satisfiability algorithm for depth $d$ threshold formulas
  with $cn$ wires that runs in time $\tilde{O}\left(2^{\left(1 - s\right)
      n}\right)$ where
        \begin{equation*}
                s = \frac1{((d-1)c)^{O(((d-1)c)^2)}}
        \end{equation*}
\end{corollary}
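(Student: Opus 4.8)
The plan is to imitate the depth-two argument of \lemref{lemma:entire} essentially verbatim, adding one structural observation specific to formulas. The base case $d=2$ is \thmref{thm:Main}. For general $d$, I would again apply the Fan-In Separation Lemma together with a single random restriction $U$ that keeps each variable independently with probability $p$, argue that after restricting, all but the top-level gate collapse except for a small number of survivors, guess the outputs of those survivors, and solve the resulting consistency check by the reduction to the Vector Domination problem through Corollaries \ref{lem:ilp} and \ref{lem:algorithm}. A depth $d$ threshold formula is a tree whose internal nodes are threshold gates and whose leaves are variable occurrences; the total number of wires (tree edges) is $cn$, so there are at most $cn$ gates. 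I would apply the Fan-In Separation Lemma to the multiset of fan-ins of all gates, whose sizes sum to at most $cn$, to pick a threshold $k$ with few wires into gates of medium fan-in, and then set $p=\Theta(\delta/(ck))$ exactly as in the depth-two lemma, so that $p=1/c^{O(c^2)}$.

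The formula-specific ingredient is the collapse analysis. Since every gate feeds exactly one parent, a gate that becomes constant under the restriction folds into its parent's threshold, and a gate that reduces to a single live input becomes a literal that the parent reads directly; processing the tree bottom-up, a gate remains a genuine multi-input threshold gate only if at least two of its inputs stay live. I would bound the expected number of such surviving gates level by level: at each of the $d-1$ levels below the top the computation is identical to the small/medium/large case analysis of the depth-two lemma (a gate of fan-in $l$ keeps two live inputs with probability at most $\binom{l}{2}p^2$, and the total fan-in at any one level is at most $cn$), giving $O(\delta pn)$ survivors per level and hence $O((d-1)\delta pn)$ survivors in total.

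To finish, with only $O((d-1)\delta pn)$ nontrivial gates remaining on the $\approx pn$ live variables, I would guess each survivor's Boolean output (a factor $2^{O((d-1)\delta pn)}$) and, for each guess, test consistency; every guess is a threshold constraint on the live variables, so the check is a \zeroone ILP with $O((d-1)\delta pn)$ inequalities in $pn$ variables, solved with constant savings by Corollary \ref{lem:algorithm} provided $(d-1)\delta$ is kept below its threshold. Optimizing the parameters reproduces the calculation of \lemref{lemma:entire} with $c$ replaced by the effective parameter $(d-1)c$, yielding $s=1/((d-1)c)^{O(((d-1)c)^2)}$ after the same remark that lets us assume the realized savings matches its expectation by repeating the restriction a constant number of times.

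The main obstacle is the collapse analysis: I need the survivor count to grow only additively in the depth, i.e.\ $O((d-1)\delta pn)$, rather than suffering the $c/p$ wire-density blowup that a naive level-by-level induction would incur, since there the depth-$(d-1)$ residual formula lives on only $pn$ variables. The delicate points are that a single restriction must thin out the gates at every level simultaneously under one choice of $k$ and $p$; that the bottom-up collapse never manufactures new high-fan-in gates (absorbing a constant child only lowers a fan-in, and absorbing a literal child merely relabels a wire); and that the per-level bounds may be summed without the surviving gates of lower levels inflating the effective fan-in of higher levels beyond the medium and large thresholds already controlled by the Fan-In Separation Lemma.
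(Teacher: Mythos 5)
Your overall architecture (random restriction, Fan-In Separation, guessing the surviving gates' outputs, reduction to ILP via Corollary \ref{lem:algorithm}) matches the paper's, but the collapse analysis contains a genuine gap, and it is precisely the point you flag as ``delicate'' at the end without resolving it: the bound that a gate of fan-in $l$ keeps two live inputs with probability at most $\binom{l}{2}p^2$ is false above the bottom level. An input to an internal gate is a sub-formula, not a variable, and it stays live with probability up to $\min(1,p\sigma)$, where $\sigma$ is the number of leaves below it --- not $p$. So a gate of fan-in $2$ whose children have subtrees with about $1/p$ leaves survives with constant probability, and since such subtrees need only be disjoint within one level, one can pack $\Theta(pn)$ such gates into a $cn$-wire formula. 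Concretely, take about $\delta n/(2a)$ level-two gates of fan-in $c/\delta$, each reading $c/\delta$ disjoint bottom-level gates of fan-in $a=c^2/\delta^2$, all feeding a single top gate. The bottom gates carry about $cn/2$ wires of fan-in $a$, which forces your fan-in--based separation to return $k=a$, hence $p=\delta/(ca)$; then every level-two gate has exactly $1/p$ variables below it and survives with probability bounded below by a constant, while its fan-in $c/\delta\le k$ classifies it as ``small,'' invisible to the medium/large counting (the only part of your analysis that does not rely on the faulty probability bound). The expected number of survivors is then $\Omega(c)\cdot pn$ against only about $pn$ live variables, exceeding both your target $3\delta pn$ and the $0.099$ gates-to-variables threshold needed for Corollary \ref{lem:algorithm}; no alternative choice of $k$ in a separation applied to \emph{fan-ins} repairs this, because the offending gates have tiny fan-in.

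The missing idea, and what the paper actually does, is to change the measure: define the \emph{size} of a gate as the size of the subtree rooted at it, which upper-bounds the number of variables the gate depends on, so that a gate of size $\sigma$ survives with probability at most $\binom{\sigma}{2}p^2$ --- a bound that is legitimate at every level. Since the subtrees of gates at any fixed depth are disjoint, the sizes at each of the $d-1$ non-top levels sum to at most $cn$, so all sizes together sum to at most $(d-1)cn$, and the Fan-In Separation Lemma is applied to this multiset with sparsity parameter $(d-1)c$, giving $p = 1/((d-1)c)^{O(((d-1)c)^2)}$ and the size-based small/medium/large trichotomy controlling survivors at all levels simultaneously under a single restriction. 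This is exactly where the $(d-1)c$ in the statement comes from; note that your parameter choice (fan-ins summing to $cn$, $k\le a^{c/\epsilon}$, $p=\Theta(\delta/(ck))$) would yield savings $1/c^{O(c^2)}$ with no dependence on $d$ at all, strictly better than the corollary claims --- that mismatch was itself a warning that fan-in is the wrong quantity to separate on. Your observation that bottom-up collapse never increases fan-ins is correct but beside the point; what matters is that collapse never increases sizes.
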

\begin{proof}[Proof sketch]
  The algorithm chooses a random restriction such that at most $\delta
  n$ gates depend on more than one variable after restriction, where
  $\delta = \frac{1}{16}$ as before. As in the original proof, we
  take into account that there is only a single top-level gate, which
  does not simplify after restriction. The main difference to our main
  proof is the notion of the fan-in. Instead of considering the number
  of inputs to a gate, consider the \emph{size} of a gate. The size of
  a gate is the size of the subtree rooted at that gate. It is also an
  upper bound to the number of variables the gate depends on.

  For all $i \leq d$, the sum of sizes of all gates at depth $i$ is at
  most $cn$, since the circuit is a tree with at most $cn$
  wires. Hence the sum of sizes of all gates (minus the top-level
  gate) is at most $(d-1)c$. 

  Using the Fan-In Separation Lemma we can select a set $U$ of size
  $pn$ where $p = \frac{1}{((d-1)c)^{O(((d-1)c)^2)}}$ such that the
  number of gates that depend on at least two variables not in $U$ is
  at most $\delta n$. We can then write each remaining gate as a
  linear inequality, as each input is either a variable, a negated
  variable or a constant, which allows us to apply Corollary
  \ref{lem:ilp}.  
\end{proof}
\section{Generalization to Symmetric Gates}

In this section we discuss a second extension, to \emph{symmetric
  gates}. A gate is symmetric if the output depends only on the
weighted sums of the inputs. In particular, threshold gates are
symmetric. The proof of our main result does not directly generalize
to symmetric gates, but we give a different algorithm to decide the
satisfiability of depth two circuits consisting of symmetric gates
that follows similar ideas as our main proof. For this algorithm we do
however require that the weights are integer and small. Specifically,
we define the \emph{weighted fan-in} of a gate as the sum of the
absolute weights and the \emph{weighted number of wires} as the sum of
the fan-ins of all the gates. The result applies to circuits with a
weighted fan-in of $cn$.

The main difference between the two algorithms is the problem we
reduce it to after applying a restriction. In our main result, we
reduce the satisfiability of the simplified circuit to a (small)
system of linear inequalities. Here, we reduce to a system of linear
equations. We first give an algorithm for linear equations.

\begin{lemma}
\label{lem:equations}
  There is an algorithm to find a Boolean solution to a system of
  linear equations on variables $\{x_1,\ldots,x_n\}$ in time
  $\tilde{O}(2^{n/2})$.
\end{lemma}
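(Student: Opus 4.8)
The plan is to use a meet-in-the-middle strategy directly analogous to the reduction to the Vector Domination problem in Corollary~\ref{lem:ilp}, except that equalities let us look for an exact \emph{collision} rather than a dominating pair, which is easier. Write the system as $\sum_{i=1}^n w_{i,j} x_i = b_j$ for $j = 1,\ldots,m$, and partition the variables into two sets $S_1, S_2$ of size $n/2$ each. For each of the $2^{n/2}$ assignments $\alpha$ to $S_1$ I would form the vector $u(\alpha) \in \reals^m$ whose $j$-th coordinate is $\sum_{x_i \in S_1} w_{i,j}\,\alpha(x_i)$; for each assignment $\beta$ to $S_2$ I would form the vector $v(\beta) \in \reals^m$ whose $j$-th coordinate is $b_j - \sum_{x_i \in S_2} w_{i,j}\,\beta(x_i)$. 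A full assignment $(\alpha,\beta)$ satisfies every equation if and only if $u(\alpha) = v(\beta)$ coordinatewise, so the task reduces to finding one vector that occurs in both the multiset $A = \{u(\alpha)\}$ and the multiset $B = \{v(\beta)\}$.

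Next I would locate such a common vector by a standard sort-and-merge. Concatenate $A$ and $B$, tagging each vector by its side and by the generating assignment, and sort the combined list of $2^{n/2+1}$ vectors lexicographically; any equal pair from opposite sides then appears as adjacent entries, so a single linear scan finds a collision if one exists, and the tags recover the satisfying assignment. (A hash table on the vectors gives the same effect in expectation.) Each vector can be computed incrementally in $O(m)$ time, and each lexicographic comparison costs $O(m)$; since the weights are integers, the partial sums are integers of magnitude at most $\text{poly}(n)$ times the largest weight, so every coordinate, and hence every vector, has bit-length polynomial in $n$. With $m = \text{poly}(n)$ equations, sorting $O(2^{n/2})$ such vectors costs $O\!\left(2^{n/2}\,\text{poly}(n)\right) = \tilde{O}(2^{n/2})$, which dominates the running time.

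The only real obstacle here is bookkeeping rather than ideas: one must verify that the vectors stay polynomially bounded, so that the sort genuinely runs in $\tilde{O}(2^{n/2})$ rather than paying for large-number arithmetic, and that equality can be tested exactly. In the symmetric-gate setting of this section the weights are integer and small, so this is immediate; for general rational coefficients one would clear denominators first and argue the same bound. Handling odd $n$ by taking $|S_1| = \lceil n/2 \rceil$, and reporting infeasibility when no collision is found, are routine.
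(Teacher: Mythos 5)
Your proof is correct and follows essentially the same meet-in-the-middle (``split and list'') strategy as the paper: the paper first packs the $m$ equations into a single subset-sum instance via a base-$D$ integer encoding and then splits the variables and sorts the $2^{n/2}$ partial sums on each side, whereas you skip the packing and sort the $m$-dimensional partial-sum vectors lexicographically, but the algorithm and its $\tilde{O}(2^{n/2})$ analysis are the same. If anything, comparing vectors coordinatewise sidesteps the care needed in choosing the base so that coordinates do not interfere in the integer encoding.
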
 
\begin{proof}
  We first reduce the problem to subset sum. Let $w_{i,j}$ be the
  weight of $x_i$ in the $j$-th equation and let $r_j$ be
  right-hand side of the $j$-th equation. Further let $D =
  \max_{i,j}\{w_{i,j},r_j\}$ be the largest such value. We define $s_i
  = \sum_{j} w_{i,j}D^j$ and $s=\sum_{j} r_{j}D^j$. Then there is a
  solution to the system of linear equations if and only if there is a
  subset of the $s_i$ that sums to $s$. 

  To solve the subset sum problem, partition the set of $s_i$ into two
  sets of equal size and list all $2^{n/2}$ possible subset sums
  each. We can then sort the lists in time $O(2^{n/2}n)$ and determine
  if there is a pair of numbers that sums to $s$. 
\end{proof}

We reduce the satisfiability problem of depth two threshold
circuits with small integer weights to a system of linear equations to
get the following result.

\begin{theorem}
  \label{lem:symmetric}
  There is a satisfiability algorithm for depth $2$ circuits with
  symmetric gates and weighted number of wires $cn$ that runs in time
  $\tilde{O}\left(2^{\left(1 - s\right) n}\right)$ where
        \begin{equation*}
                s = \frac1{c^{O(c^2)}}
        \end{equation*}
\end{theorem}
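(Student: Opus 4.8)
The plan is to follow the same three-stage architecture as the main proof (Theorem~\ref{thm:Main}): a random restriction to kill most gates, a reduction of the surviving circuit to a combinatorial core, and a fast algorithm for that core. The only structural change is the target of the final reduction. Where the threshold proof reduces a simplified bottom-level gate to a single linear \emph{inequality} (handled by Vector Domination via Corollary~\ref{lem:algorithm}), here we reduce a simplified symmetric gate to a system of linear \emph{equations}, solved by the meet-in-the-middle subset-sum algorithm of Lemma~\ref{lem:equations} in time $\tilde{O}(2^{n/2})$.

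First I would set up the restriction exactly as in the threshold case, but replacing ``number of wires'' with ``weighted number of wires.'' Since the weighted fan-in of a gate is the sum of its absolute weights and the total weighted number of wires is $cn$, the family $\mathcal{F}$ fed to the Fan-In Separation Lemma is the multiset of weighted fan-ins, and $\sum_{F \in \mathcal{F}} |F| \leq cn$ still holds. Running the Fan-In Separation Lemma with $\epsilon = \delta^2/c$ and $a = c^2/\delta^2$ produces a threshold $k \leq (c^2/\delta^2)^{c^2/\delta^2}$, and taking $p = \delta/(ck)$ gives, by the same small/medium/large gate case analysis as in the lemma preceding Lemma~\ref{lemma:entire}, an expected count of at most $3\delta p n$ gates that depend on two or more unrestricted variables. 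Choosing $\delta = 1/48$ as before keeps $3\delta = 1/16$ small. Here I must be slightly careful that ``depends on a variable'' for a symmetric gate means that variable has nonzero weight, and that bounding $P(X_i) \leq \binom{l_i}{2}p^2$ uses the \emph{weighted} fan-in $l_i$ rather than the raw number of inputs; this is fine because a gate with weighted fan-in $l_i$ has at most $l_i$ inputs with nonzero integer weight, so the same counting goes through.

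Next I would describe what a surviving gate looks like after restriction. For a symmetric gate, fixing all but at most one input variable reduces the weighted sum to an affine function of that single remaining variable (or a constant), so the gate's output becomes determined by which value of one variable, or by a fixed constant. The key observation is that, because the gate is symmetric and we are committing to a guessed output value for each of the $O(\delta n)$ remaining gates (including the single top-level gate), the condition ``gate $j$ outputs value $v_j$'' translates into requiring the weighted input sum to equal a specific target --- that is, a \emph{linear equation} over the unrestricted Boolean variables. This is where integrality and smallness of the weights matter: only with integer weights does ``the weighted sum lies in the preimage of $v_j$ under the symmetric function'' collapse to a finite disjunction of exact linear equations, and only with small weights does the subset-sum encoding $s_i = \sum_j w_{i,j} D^j$ in Lemma~\ref{lem:equations} stay polynomially bounded so that the $\tilde{O}(2^{n/2})$ bound is genuine. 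I would guess the output value of every surviving gate (a factor $2^{O(\delta n)}$ over the remaining $pn$ variables), and for each guess solve the resulting system of linear equations by Lemma~\ref{lem:equations}.

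Putting the pieces together mirrors Lemma~\ref{lemma:entire}: over a random restriction of all but a $p = 1/c^{O(c^2)}$ fraction of the variables, the expected logarithm of the running time is $(1-p)n$ for enumerating the restriction plus $(1/2 + O(\delta))\,pn$ for the guess-and-solve step on the $pn$ survivors, giving $E[\log T] = (1 - s'p)n$ for a constant $s' > 0$, hence $E[s] = 1/c^{O(c^2)}$; repeating a constant number of times yields an actual restriction achieving the expected savings, exactly as the remark following Lemma~\ref{lemma:entire} does for the main theorem. The main obstacle I anticipate is \emph{not} the restriction bookkeeping but the reduction of a symmetric gate to linear equations: I must show that the preimage of a fixed output value under a symmetric function of integer-weighted Boolean inputs is a union of a controlled number of exact affine conditions (one per attainable weighted-sum value), and that enumerating these does not blow up the $2^{\delta n}$ budget. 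Because each gate has weighted fan-in at most $cn$, the number of distinct attainable sums per gate is polynomially bounded, so this contributes only a $\text{poly}(n)$ factor per gate and the total guessing stays within $2^{O(\delta n)}\text{poly}(n)$ --- but verifying this bound cleanly, and confirming that combining per-gate disjunctions across all $\delta n$ surviving gates still reduces to a single subset-sum instance per combined guess, is the delicate step that makes the $\tilde{O}(2^{n/2})$ core algorithm applicable.
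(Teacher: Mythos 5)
There is a genuine gap, and it sits exactly where you flag the ``delicate step.'' Your accounting of the guessing overhead does not work: you guess the Boolean output of each of the $\Theta(\delta pn)$ surviving gates and then enumerate the preimage of that output as ``a controlled number of exact affine conditions \ldots\ poly$(n)$ per gate.'' A polynomial number of choices \emph{per gate}, combined across $\Theta(pn)$ gates, is $n^{\Theta(pn)}$ systems of equations, not $2^{O(\delta pn)}\mathrm{poly}(n)$; this swamps the $2^{pn/2}$ saving from Lemma~\ref{lem:equations} and destroys the result. The paper avoids this by guessing the integer \emph{value} (weighted sum) of each surviving gate directly, bounding the number of values of a gate with $l$ \emph{remaining} inputs by $\min\{2^{l},2l+1\}$, and bounding $\expt[\log\prod_i(2l_i+1)]$ via concavity. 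Crucially, that bound depends on the interplay between $p$ and the weighted fan-in $f$: for some pairs $(p,f)$ the resulting savings $s_{p,f}$ is \emph{negative} (Lemma~\ref{lem:Savings}), so no single restriction probability chosen in advance is guaranteed to work.

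This is why your second structural choice --- reusing the Fan-In Separation Lemma to fix one value of $p$, exactly as in the threshold case --- also fails to deliver the stated bound. The paper explicitly reports that the fan-in-separation route yields only doubly exponentially small savings in $c$ for symmetric gates. The claimed $s=1/c^{O(c^2)}$ requires the additional game-theoretic ingredient: treat the choice of $p$ versus the circuit's wire distribution as a zero-sum game, exhibit an explicit distribution on $p\in\{2^{-i}:1\le i\le O(c^2\log c)\}$ whose expected savings against every pure strategy $f$ is $1/c^{O(c^2)}$ (using the convexity statement of Lemma~\ref{lem:Convex}), and then, given the actual circuit, try each $p$ in the support and run the restriction with the best one. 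Your proposal is missing this entire component, and without it the argument as written does not establish the theorem.
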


As before, we pick a random restriction with some parameter $p$, such
that most gates depend on at most one variable.

Given an assignment, we distinguish between the Boolean output of a
gate and the \emph{value}. The \emph{value} is defined as the weighted
sum of the inputs. Note that the value uniquely defines the output of
a symmetric gate. Unlike our main proof, we guess the value of the
remaining gates, including the top-level gate. Given a value for every
gate, we can write a system of linear equation. We then solve the
system of linear equations on $n$ Boolean variables in time
$\tilde{O}(2^{n/2})$ using Lemma \ref{lem:equations}.

We need the overhead for guessing the values to be smaller than the
savings achieved with solving the system of linear equations. For
this, it is crucial that both the number of remaining gates and the
number of values they can obtain is small. Here we use the requirement
that the weights are small. We defer the details of the calculation on
how many systems of linear equations we need to solve until section
\ref{sectionProof}

One possible approach would be to select $p$ using a fan-in separation
technique. However, we only achieved savings that are doubly
exponentially small in $c$ using this approach. To get better savings,
it is useful to model the interplay between the parameter $p$ and the
circuit as an explicit zero-sum game, where the first player's (the
algorithm designer) pure strategies are the values of $p$ and the
second player's (the circuit designer) pure strategies are the
circuits where all the gates have the same fan-in.  The payoff is the
difference between the saving of solving the subset sum problem and
the overhead of guessing the values of the gates.

The mixed strategies of the circuit designer are circuits of symmetric
gates with a weighted number of wires of at most $cn$, where each such
circuit is viewed as a distribution of the total number of wires among
gates of different weighted fan-in.  The mixed strategies of the
algorithm designer are distributions on the values of $p$.  We then
apply the Min-Max theorem to lower bound the expected value of the
game by exhibiting a distribution (with finite support) on the values
of $p$.  We search through the values in the support of the
distribution to find a $p$ that produces the expected value.  This
novel game-theoretic analysis yields an overall savings which is only
single exponentially small in $c$. Section \ref{sectionMinMax}
contains the details of the Min-Max approach.

\subsection{The Algorithm}
\label{sectionProof}

We develop the algorithm of Lemma \ref{lem:symmetric} in three
stages. In this section, we consider $p$ a parameter and present a
satisfiability algorithm for depth two circuits with symmetric gates
and weighted number of wires of $cn$. We further assume that all the
bottom-level gates have the same weighted fan-in $f$. The algorithm
achieves savings $s_{p,f}$ and for certain combinations of $p$ and $f$
the savings might be negative. In the second stage we extend the
algorithm to circuits with varying fan-in and show that the
savings of the algorithm is a convex combination of $s_{p,f}$. In the
last stage, in Section \ref{sectionMinMax} we show how to select a
$p$ such that the savings is at least $\frac1{c^{O(c^2)}}$ for any
distribution on $f$.

As we are mainly interested in the savings, we look at the logarithm
of the time complexity and bound its expectation.

\begin{lemma}
\label{lem:Savings}
Let $0 \leq p \leq 1$ be a parameter and $C$ be a depth two circuit
with symmetric gates, variables $V=\{x_1, \ldots, x_n\}$, a weighted
number of wires of $cn$, and weighted fan-in $f$ for all bottom-level
gates. There is an algorithm that decides the satisfiability for such
$C$ with time complexity $T$ such that $\expt[\log(T)] = (1-s_{p,f})n$
for
\begin{equation*}
                s_{p,f} = \left\{
                        \begin{array}{ll}
                                \frac{p}{4} & \text{ if } pf < \frac{1}{4c} \\
                                \frac{p}2 - \frac{c}f \log\left(8cpf\right) & \text{ otherwise}
                        \end{array} \right.
        \end{equation*}
\end{lemma}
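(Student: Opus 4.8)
The plan is to mirror the structure of the threshold case (random restriction plus a hashing-style midpoint algorithm), but with the ILP/vector-domination step replaced by the subset-sum solver of Lemma~\ref{lem:equations}. First I would keep each variable \emph{free} independently with probability $p$ and fix the remaining $\approx (1-p)n$ variables, looping over all $2^{(1-p)n}$ assignments to the fixed set. For each fixed assignment the circuit simplifies, and I will reduce its satisfiability to a system of linear equations on the $\approx pn$ free variables, which Lemma~\ref{lem:equations} solves in $\tilde{O}(2^{pn/2})$. Writing $T$ for the total time and $G$ for the number of guesses made per fixed assignment (defined below, and depending only on the restriction), taking logarithms and expectations gives
\begin{equation*}
\expt[\log T] = (1-p)n + \expt[\log G] + \tfrac{p}{2}n + \tilde{O}(1) = \left(1-\tfrac{p}{2}\right)n + \expt[\log G],
\end{equation*}
so that $s_{p,f} = \tfrac{p}{2} - \tfrac1n\expt[\log G]$ and the whole problem reduces to estimating $\expt[\log G]$.

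To produce the equations I classify each bottom gate by the number $k_i$ of its input variables that remain free. A gate with $k_i=0$ is now constant, and a gate with $k_i=1$ computes (since the gate is symmetric) a literal or constant in that single free variable; in both cases its contribution to the symmetric top gate is an affine function of the free variables over $\{0,1\}$, so these gates fold directly into the top gate and cost no guesses. For each \emph{complex} gate ($k_i\ge 2$) I instead guess its \emph{value} $\sum_{j\text{ free}} w_{ij}x_j$; fixing the value pins down the gate's Boolean output and contributes one linear equation. I also guess the value of the top gate (at most $\mathrm{poly}(n)$ choices, absorbed into $\tilde{O}$), after which ``the circuit outputs $1$'' becomes one more linear equation. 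Hence $\log G = \sum_{i\text{ complex}} \log V_i + O(\log n)$, where $V_i$ is the number of attainable values of gate $i$'s free part. The point of guessing values rather than outputs is that an output constraint is an \emph{inequality}, whereas a value constraint is an \emph{equation}, which is exactly what Lemma~\ref{lem:equations} consumes.

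The crux is bounding $\expt[\log V_i]$, and the two cases in the statement arise from two incomparable bounds. Let $l_i\le f$ be the fan-in of gate $i$ and $W_i=\sum_{j\text{ free}}|w_{ij}|$ its free weight, so $\expt[W_i]=pf$, and recall $\sum_i l_i \le cn$ and $m=cn/f$. The crude bound $V_i\le 2^{k_i}$ gives $\log V_i\le k_i$; using $k_i\,\mathbf{1}[k_i\ge 2]\le k_i(k_i-1)$ and $\sum_i l_i^2\le f\sum_i l_i$ yields
\begin{equation*}
\expt[\log G] \le p^2\sum_i l_i^2 \le p^2 f\sum_i l_i \le cnfp^2 .
\end{equation*}
When $pf<\tfrac1{4c}$ this is below $np/4$, so $s_{p,f}\ge \tfrac p2-\tfrac p4=\tfrac p4$. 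The sharper bound $V_i\le 1+W_i$ together with concavity of $\log$ (Jensen) gives $\expt[\log V_i]\le\log(1+pf)$, hence $\expt[\log G]\le m\log(1+pf)=\tfrac{cn}{f}\log(1+pf)\le \tfrac{cn}{f}\log(8cpf)$, where the last step uses $1+pf\le 8cpf$ (equivalently $pf\ge\tfrac1{8c-1}$, which holds whenever $pf\ge\tfrac1{4c}$ and $c\ge 1$); this yields $s_{p,f}\ge \tfrac p2-\tfrac cf\log(8cpf)$. Taking the better bound in each regime gives the stated piecewise $s_{p,f}$.

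The main obstacle is precisely this two-regime estimate. A single value bound will not do: the clean Jensen bound $m\log(1+pf)$ is useless when $pf$ is small (for large $c$ it exceeds the $\tfrac p2 n$ savings entirely), while the second-moment bound $cnfp^2$ blows up when $pf$ is large. The delicate point in the small-$pf$ case is to fold the $k_i\le 1$ gates away so that only genuinely complex gates are guessed — guessing even the single-literal gates would contribute $\Theta(cnp)$ and ruin the savings — and then to control $\sum_i l_i^2$ via $l_i\le f$; the factor $8c$ is merely the slack needed to write the large-$pf$ bound in closed form.
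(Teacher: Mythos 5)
Your proposal is correct and follows essentially the same route as the paper's proof: the same random restriction with free-probability $p$, folding gates with at most one free input into the top gate, guessing the \emph{values} of the remaining gates to obtain a linear system solved by Lemma~\ref{lem:equations}, and the same two-regime bound on the guessing overhead (a second-moment bound when $pf$ is small, and a value-count-plus-Jensen bound of $\frac{cn}{f}\log(8cpf)$ when $pf$ is large). Your micro-level variants — computing $\expt[k_i(k_i-1)]=l_i(l_i-1)p^2$ directly instead of the paper's $\expt[X]-\prob[X=1]$ estimate, and bounding the value count by $1+W_i$ in terms of the free weight rather than $2l_i+1$ — are if anything slightly cleaner, but do not constitute a different argument.
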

\begin{proof}
  We select a random subset $U \subseteq V$ such that a variable is in
  $U$ with probability $(1-p)$ independently. We note that $\expt[|U|]
  = (1-p)n$.  For each of the $2^{|U|}$ assignments to $U$, we solve
  the satisfiability problem of the simplified circuit.  Bottom-level
  gates where all inputs are in $U$ are removed and the top-level gate
  is adjusted appropriately.  Gates that only depend on one input are
  replaced by a direct wire to the top-level gate with an appropriate
  weight and adjustment to the top-level gate.  For all gates with at
  least two remaining inputs, we guess the value of the gate and
  express the gate as a linear equation.  Similarly, we guess the
  value of the top-level gate to get another linear equation. We then
  solve the resulting system of linear equations on $n' = n - |U|$
  variables in time $\tilde{O}\left(2^{n'/2}\right)$ using Lemma
  \ref{lem:equations}.

  The critical part of the analysis is bounding the overhead from
  guessing the values of the gates.  We first bound the number of
  distinct values a gate can take.  The top-level gate can only take
  polynomially many different values.  Consider a bottom-level gate
  with fan-in $l \geq 2$ after applying an assignment to the variables
  in $U$.  We bounds the number of distinct values that the gate can
  take in two different ways.  The number of possible inputs, and
  hence the number of possible values is bounded by $2^l$.  On the
  other hand, since the value is an integer between $-l$ and $l$, the
  number of possible values for the gates is also upper bounded by
  $2l+1$.  Hence, we use $\min\{2^l,2l+1\}$ as an upper bound for the
  number of values of a bottom-level gate with fan-in $l$.

  Since we have a control on the number of distinct values taken by a
  gate by assumption, our overhead crucially depends on the number of
  exceptional gates, gates that depend on more than one variable after
  applying an assignment to the variables in $U$. Intuition says that
  the number of exceptional gates should be small.  If the fan-in of a
  gate is small, then we expect that it will simplified to depend on
  at most one variable after assigning values to the variables in $U$.
  On the other hand, there cannot be too many gates of large fan-in.
  While the intuition is simple, it is tricky to make it work for us
  in the general context. At this stage, our focus is on estimating
  the savings $s_{p,f}$ for the probability parameter $p$ and weighted
  fan-in $f$.

  Let $H$ be a random variable denoting the number of possible values
  the remaining gates can obtain.  Our estimation of $H$ and $s_{p,f}$
  involves two cases.  Let $t= \frac{1}{4c}$. We first consider the
  case $pf < t$.  Let $U' \subseteq V - U$ be the set of variables
  that appear in exceptional gates.  Our goal is to upper bound
  $\expt[\log(H)] \leq \expt[U']$.

  Consider a bottom-level gate. Let $X$ be the random variable
  denoting the number of its inputs not in $U$.  Let $f' \leq f$ be
  the number of variables the gate depends on, and let $X$ be the
  random variable denoting the number of its inputs not in $U$.  The
  distribution of $X$ is $\text{Bin}(f',p)$, hence we have $\expt[X] =
  f'p$.  Let the random variable $Y$ denote the number of variables
  that the gate can contribute to $U'$.  Since $U'$ is the set of
  variables appearing in exceptional gates, we have $Y=X$ for $X\geq
  2$ and $Y=0$ otherwise. Hence
\begin{align*}
  \expt[Y] &= \expt[X] - \prob[X=1] \leq f'p - f'p(1-p)^{(f'-1)} \\
  &\leq f'p(1 - (1-p)^{f'} ) \leq f'p(1-(1-f'p)) \\
  &= (f'p)^2 \leq (fp)^2
\end{align*}
by Bernoulli's inequality.  Hence, for any variable $x$ which is an
input to the gate, the probability $x$ belongs to $U'$ is at most
$\frac{\expt[Y]}{f} \leq p^2 f \leq \frac{p}{4c}$.  Since the total
number of wires is bounded by $c n$, we have $\expt[\log(H)] \leq
\expt[|U'|] = c n \frac{p}{4c} = \frac{p}{4} n$.

For the logarithm of the time complexity this yields
\begin{equation*}
  \expt[\log(T)] = \expt[|U|] +
  \expt\left[\frac12 \left(n-|U|\right)\right] +
  \frac{p}{4} n + O(\log n) \leq 
  n\left(1-\frac{p}{4}\right) + O(\log n)
\end{equation*}
where the logarithmic summand stems from guessing the value of the
top-level gate. We have $s_{p,f} = \frac{p}4$.

We now consider the case $pf \geq t$.  Suppose the $i$-th gate has
$l_i$ inputs that are not in $U$.  The expected value of $l_i$ is
$pf$.  There are at most $2l_i + 1$ possible values for the
gate. Since all the bottom-level gates have the same weighted fan-in
$f$, the number of bottom-level gates is at most $cn/f$ and
$\expt[\sum_{i=1}^{cn/f} l_i] = pcn$.  We bound the expected logarithm
of the number of possible values of all gates by
\begin{align*}
  \expt\left[\log\left(\prod_{i=1}^{cn/f}\right)
    \left(2l_i + 1\right)\right] &= (cn/f) \sum_{i=1}^{cn/f} \expt\left[\left(\log(2l_i + 1) f/cn\right)\right] \leq (cn/f) \log(2pf + 1) \\
  &\leq (cn/f) \log\left(8cpf\right)
\end{align*}
where we use the concavity of the logarithm function in the penultimate step
and the fact $pf \geq \frac{1}{4c}$ in the last step.

For the logarithm of the time complexity we get, 
\begin{equation*}
  \expt\left[|U|\right] + \expt\left[\frac12
    \left(n-|U|\right)\right] + cn/f
  \log\left(8cpf\right) + O(\log n) \leq
  n\left(1-\left(\frac{p}2 - \frac{c}{f} \log\left(8cpf\right)
    \right)\right) + O(\log n)
\end{equation*}
with savings $s_{p,f} = \frac{p}2 - \frac{c}f \log\left(8cpf\right)$.
\end{proof}

We now extend the algorithm to circuits with varying fan-in and show
that the logarithm of the time complexities is lower bounded by a
convex combination of the savings $s_{p,f}$.  We model the $cn$-wire
circuits of varying weighted fan-in by a distribution $\mathcal{F}$ on
wires.  For each weighted fan-in $f$, the wire distribution
$\mathcal{F}$ specifies the number $c_f n$ of wires of bottom-level
gates of weighted fan-in $f$.  We denote the savings of our algorithm
on circuits with wire distribution $\mathcal{F}$ by
$s_{p,\mathcal{F}}$.
\begin{lemma}
\label{lem:Convex}
Let $0 \leq p \leq 1$ be a parameter and $C$ be a depth two circuit with
symmetric gates, $n$ variables and a weighted number of wires of $cn$,
where the wires are distributed according to $\mathcal{F}$. There is a
satisfiability algorithm for such $C$ with time complexity $T$ such
that $\expt[\log(T)] = (1-s_{p,\mathcal{F}})n$ for
        \begin{equation*}
                s_{p,\mathcal{F}} \geq \sum_{f=1}^n \frac{c_f}c s_{p,f}
        \end{equation*}
\end{lemma}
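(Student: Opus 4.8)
The plan is to run exactly the algorithm of Lemma~\ref{lem:Savings}: draw the random restriction $U$ (each variable omitted with probability $p$ independently), guess the value of every bottom-level gate that still depends on at least two variables together with the value of the top-level gate, and solve the resulting system of linear equations on the $n' = n - |U|$ surviving variables in time $\tilde{O}(2^{n'/2})$ via Lemma~\ref{lem:equations}. Nothing in this algorithm refers to the fan-in being uniform, so the only thing that changes relative to the single fan-in case is the accounting of the guessing overhead. The bulk of the work is therefore to show that this overhead splits additively across the fan-in classes and reassembles into the claimed convex combination.

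First I would record the same decomposition of the running time as in Lemma~\ref{lem:Savings}, namely
\begin{equation*}
  \expt[\log(T)] = \expt[|U|] + \tfrac12\expt[n - |U|] + \expt[\log(H)] + O(\log n),
\end{equation*}
where $H$ is the product over all surviving gates of the number of distinct values each can take. The first two terms contribute $(1 - p/2)n$ independently of $\mathcal{F}$, and the $O(\log n)$ comes from the top gate, so the entire dependence on the wire distribution is carried by $\expt[\log(H)]$. Writing $o_{p,f} = p/2 - s_{p,f}$ for the per-class overhead rate, it then suffices to prove $\expt[\log(H)] \le n\sum_f \frac{c_f}{c}\,o_{p,f}$, since then $s_{p,\mathcal{F}} = p/2 - \tfrac1n\expt[\log(H)] \ge p/2 - \sum_f \frac{c_f}{c}o_{p,f} = \sum_f \frac{c_f}{c}s_{p,f}$, using $\sum_f c_f = c$.

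The key step is the additive split $\log(H) = \sum_f \log(H_f)$, where $H_f$ is the product of value-counts over the bottom-level gates of weighted fan-in $f$; this is immediate because the logarithm of a product is the sum of logarithms and expectation is linear. The point is that every per-gate estimate in Lemma~\ref{lem:Savings} is local: the bound $\expt[Y] \le (fp)^2$ on a gate's contribution to the exceptional variables and the bound $\expt[\log(2l_i+1)] \le \log(2pf+1)$ obtained by concavity each depend only on $p$ and the gate's own fan-in $f$, never on the rest of the circuit. Hence I can apply those estimates class by class, with the global wire count $cn$ replaced by the class wire count $c_f n$ and the gate count $cn/f$ replaced by $c_f n / f$. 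For a class with $pf < t = 1/(4c)$, summing $\expt[Y] \le (fp)^2$ over its $c_f n/f$ gates gives $\expt[\log(H_f)] \le c_f n\,p^2 f \le n\frac{c_f}{c}\cdot\frac p4$, where the last inequality is precisely $pf \le 1/(4c)$. For a class with $pf \ge t$, summing over its $c_f n/f$ gates gives $\expt[\log(H_f)] \le (c_f n/f)\log(2pf+1) \le (c_f n/f)\log(8cpf) = n\frac{c_f}{f}\log(8cpf)$, using $1 \le 4cpf$ as in the original argument. In both cases the right-hand side equals $n\frac{c_f}{c}o_{p,f}$, matching the two branches in the definition of $s_{p,f}$, and summing over $f$ yields the required bound on $\expt[\log(H)]$.

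The main thing to get right is that the case split $pf < t$ versus $pf \ge t$ is applied per fan-in class and lines up with the two branches defining $s_{p,f}$; the only genuine subtlety I expect is confirming that the single-gate bounds of Lemma~\ref{lem:Savings} are truly free of any global normalization by $cn$, so that they may be summed over the classes rather than being re-derived. Because some $s_{p,f}$ may be negative, I would also note that the statement is merely a lower bound on $s_{p,\mathcal{F}}$, so no positivity of the individual summands is needed and the convex combination is valid as written.
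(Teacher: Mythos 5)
Your proposal is correct and follows essentially the same route as the paper: run the algorithm of Lemma~\ref{lem:Savings} unchanged, split the guessing overhead $\log(H)=\sum_f\log(H_f)$ by weighted fan-in class, bound each $\expt[\log(H_f)]$ by the corresponding per-class overhead (with $cn$ and $cn/f$ replaced by $c_fn$ and $c_fn/f$), and conclude by linearity of expectation and $\sum_f c_f = c$. The paper states this in two sentences; your write-up just makes the class-by-class accounting explicit.
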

\begin{proof}
  The algorithm is the same as above.  The logarithm of the overhead
  of guessing the values for all bottom-level gates with fan-in $f$ is
  $\log(H_f) = \frac{c_fn}f \log\left(8cpf\right)$ if $pf \geq t$ and
  $\log(H_f) = \frac{c_f}c \frac{p}{4} n$ otherwise.  Solving the
  system of linear equations and using linearity of expectation then
  yields the savings as claimed.
\end{proof}

\subsection{The Algorithm as a Zero-Sum Game}
\label{sectionMinMax}
The time complexity of the algorithm in Section \ref{sectionProof}
depends crucially on choosing a suitable parameter $p$. Instead of
trying to directly determine a good parameter $p$ by analyzing the
wire distribution of the circuit, we apply a trick from game theory.

A zero-sum game with two players $\A$ and $\C$ is a game where both
players pick a strategy and the outcome is determined by a function of
the two strategies.  Player $\A$ tries to maximize the outcome, while
player $\C$ tries to minimize it.  The Min-Max Theorem states that it
does not matter which player moves first, as long as we allow mixed
strategies for the players.

We model the task of choosing the parameter $p$ as the following
zero-sum game: Player $\A$, the algorithm designer, picks some
probability $p$, and player $\C$, the circuit designer, picks a value
$f$.  The outcome is $s_{p,f}$, the savings of the algorithm.  The
algorithm designer tries to maximize the savings, and the circuit
designer tries to minimize it.  The wire distribution of a circuit is
a mixed strategy for the circuit designer.  A mixed strategy for the
algorithm designer $\A$ would be a distribution on the probabilities.

A direct approach for designing the algorithm would be to select the
parameter $p$ depending on the circuit so that we obtain large
savings.  Specifically, given the wire distribution of the circuit
$\mathcal{F}$, the algorithm designer picks a $p$ and and the outcome
$s_{p,\mathcal{F}}$ is a convex combination of the values $s_{p,f}$.
Using the Min-Max Theorem we turn this game around: The algorithm
designer picks a mixed strategy and the circuit designer responds with
a pure strategy $f$, a circuit where all bottom-level gates have
weighted fan-in $f$.  The following lemma shows that there is a good
strategy for the algorithm designer.

\begin{lemma}
  There is a distribution $\mathcal{D}$ on parameters $p$ such that
  for all $f$,
        \begin{equation*}
          \expt_{p \sim \mathcal{D}} [s_{p,f}] \geq \frac1{c^{O\left(c^2\right)}}
        \end{equation*}
\end{lemma}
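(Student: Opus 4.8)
The plan is to exhibit an explicit finite-support distribution $\mathcal{D}$ on $p$ and verify the claimed inequality against every \emph{pure} response $f$ of the circuit designer; the Min-Max theorem then certifies that this primal bound equals the value of the game, which is exactly what later lets the algorithm scan the finite support of $\mathcal{D}$ for a good $p$ on the real wire distribution via \lemref{lem:Convex}. First I would record the qualitative shape of $s_{p,f}$ as a function of $p$ for fixed $f$ from \lemref{lem:Savings}: on the regime $pf<\frac{1}{4c}$ it equals $p/4$, small but positive; once $pf\geq\frac{1}{4c}$ it equals $p/2-\frac{c}{f}\log(8cpf)$, which is \emph{negative} on a bounded multiplicative window of $p$ and only climbs back to a positive $\Theta(p)$ value once $pf$ is large enough. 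This sign structure is precisely why no single $p$ can succeed against all $f$, and hence why a genuinely mixed $\mathcal{D}$ is required.

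Next I would observe that only a $c$-bounded (independent of $n$) range of $f$ is actually hard: for $f$ above a $\mathrm{poly}(c)$ threshold a single moderate $p$ already lands in the positive part of the second case and yields $\Theta(1)$ savings, and for $f$ larger still the term $\frac{c}{f}\log(8cpf)$ is negligible so \emph{every} support point contributes $\approx p/2>0$. Thus the effective adversarial strategies are $f$ in a bounded window. I would then take $\mathcal{D}$ supported on a geometric grid $p_j=a^{-j}$, $0\leq j\leq L$, reaching down to $p_{\min}=c^{-\Theta(c^2)}$, with weights $w_j$ that \emph{increase geometrically} as $p_j$ decreases. For a fixed pure $f$, choosing $f$ amounts to selecting a cut index $j_f\approx\log_a(4cf)$: every grid point with $p_j<\frac{1}{4cf}$ sits in the first case and contributes the positive amount $w_j\,p_j/4$; a window of $O(\log_a c)$ points just above the cut sits in the negative part of the second case, with per-point savings bounded below by $-O(\mathrm{poly}(c))\cdot p_j$; and the remaining large-$p$ points (when $f$ is large) contribute positively again. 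The verification then reduces to a one-parameter inequality in $j_f$ asserting that the weighted positive mass dominates the weighted negative mass of the bad window.

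The main obstacle is choosing $a$, $L$, and the weight profile $\{w_j\}$ so that this inequality holds \emph{simultaneously for the worst cut} $j_f$, since the adversary will place the case boundary exactly where $\mathcal{D}$ is most exposed. Because the bad window is $\Theta(\log_a c)$ indices wide and its worst point carries a negative savings a $\mathrm{poly}(c)$ factor larger (relative to $p$) than the best positive first-case point, beating it forces the weights to rise by a constant multiplicative factor across each window; compounding this factor over the $\Theta(c^2\log c)$ geometric scales down to $p_{\min}=c^{-\Theta(c^2)}$ is what drives the guaranteed net savings to $c^{-O(c^2)}$. Once the inequality is checked for every $f$, $\expt_{p\sim\mathcal{D}}[s_{p,f}]\geq c^{-O(c^2)}$ follows; combined with the Min-Max theorem and \lemref{lem:Convex} this produces a single $p$, found by scanning the finite support, achieving the same savings on an arbitrary circuit.
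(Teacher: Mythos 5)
Your proposal matches the paper's proof in all essentials: the paper uses exactly this distribution ($p=2^{-i}$ for $1\le i\le I$ with $I=O(c^2\log c)$, weighted proportionally to $2^{i}$ so that smaller $p$ carries more mass), splits the expectation at the case boundary $pf\ge\frac{1}{4c}$ of $s_{p,f}$, and checks that the positive mass (each support point contributes $\Theta(2^{-I})$ to the weighted sum, totalling $\Theta(I\,2^{-I})$) dominates the negative second-case contribution (a geometric sum that collapses to $O(c^2\log c)\cdot 2^{-I}$). The one step you leave implicit --- the worst-case inequality over the cut index --- is precisely the computation the paper carries out, and it confirms your parameter choices.
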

\begin{proof}
  Let $\mathcal{D}$ be the following distribution on $p$: For $I =
  O\left(c^2\log(c)\right)$ with suitable constants, and $1 \leq i
  \leq I$, we set $p = 2^{-i}$ with probability $A\cdot2^{-(I-i+1)}$,
  where $A=\frac1{\sum_{i=1}^I2^{-(I-i+1)}}$ is the normalization
  factor.  We know that $1 \leq A \leq 2$.  The expectation of $p$ is
  $\expt[p]= AI2^{-I-1}$.

  Let $f$ be any pure strategy of the circuit designer and $J =
  \log(f)$.  The expected outcome of the game for these strategies is
  \begin{equation*}
    \expt_{p \sim \mathcal{D}} [s_{p,f}] = \sum_{i=1}^{I} 2^{-(I-i+1)} s_{2^{-i},2^J}.
  \end{equation*}

  To lower bound the expected outcome, we use a case analysis on the
  savings similar to the one in Section \ref{sectionProof}.  Let $t =
  \frac{1}{4c}$ as defined in the previous section.  Let $I' \leq I$
  be the largest value such that for $i \leq I'$, we have $2^{J-i}
  \geq t$ and for $I' < i \leq I$ we have $2^{J-i} < t$.

  Using the savings from Lemma \ref{lem:Savings}, we have
  $s_{2^{-i},2^J} = 2^{-i-1} - \frac{c}{2^J}
  \log\left(c2^{J-i+1}\right)$ for $2^{J-i} \geq t$ and
  $s_{2^{-i},2^J} = 2^{-i-2}$ otherwise. The expected
  savings is then
        \begin{align*}
          \expt_{p \sim \mathcal{D}} [s_{p,f}] &= \sum_{i=1}^{I} 2^{-(I-i+1)} s_{2^{-i},2^J} \\
          &= \sum_{i=1}^{I'} 2^{-(I-i+1)} \left(2^{-i-1} - \frac{c}{2^J} \log\left(c2^{J-i+3}\right) \right) + \sum_{i=I'+1}^I 2^{-(I-i+1)} 2^{-i-2} \\
          &\geq \sum_{i=1}^{I} 2^{-(I+3)} - \sum_{i=1}^{I'} 2^{-(I-i+1)} \frac{c}{2^J} \log\left(c2^{J-i+3}\right) \\
          &= \frac1{2^{I+1}} \left(\frac{I}4 - c\sum_{i=1}^{I'}
            2^{-(J-i)} \log\left(c2^{J-i+3} \right) \right)
        \end{align*}

	Let $j = \lceil(J - i)\rceil$. By the definition of $I'$ we
        have $j \geq \log(t) = -\log(c) -2$. Hence
        \begin{align*}
          \sum_{i=1}^{I'} 2^{-(J-i)} \log\left(c2^{J-i+3}\right) &\leq \sum_{j=\log(t)}^{\infty}  2^{-j} \left(j +\log\left(8c\right) \right) \\
          &\leq 8c \log\left(8c\right) +  \sum_{j=1}^{\infty}  j 2^{-j} +  \log\left(8c\right) \\
          &= O\left(c \log(c)\right)
        \end{align*}
        
        Hence for $I = O\left(c^2\log(c)\right)$ we get
        \begin{equation*}
          \expt_{p \sim \mathcal{D}} s_{p,f} = \frac1{c^{O\left(c^2\right)}}
        \end{equation*}
\end{proof}

We now conclude that for every $f$ there is a $p = 2^{-i}$ for $1 \leq
i \leq I$, such that $s_{p,f} \geq
\frac1{c^{O\left(c^2\right)}}$. Using that for every mixed strategy
for $f$, the savings is a convex combination of the savings for pure
strategies, we conclude the same for any strategy on $f$.

This gives us the final algorithm: Given a circuit $C$ with wire
distribution $\mathcal{F}$, evaluate $\expt_{f \sim \mathcal{F}}
[s_{p,f}]$ with $p= 2^{-i}$ for each $1 \leq i \leq I$ as above and
use the optimal $p$ for the random restriction.

The savings is tight in the sense that there is a mixed strategy on
$f$ such that the expected savings is at most $1/2^{\Omega(c)}$.
\begin{lemma}
  There is a wire distribution $\mathcal{F}$ such that for any $p$
  \begin{equation*}
    \expt_{f \sim \mathcal{F}} [s_{p,f}] \leq \frac1{2^{\Omega(c)}}
  \end{equation*}
\end{lemma}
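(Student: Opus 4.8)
The plan is to exhibit $\mathcal{F}$ explicitly and verify that it defeats every $p$ by a direct case analysis. I would take the wire distribution that is \emph{uniform across logarithmically spaced fan-ins}: fix $L = \Theta(c)$, and for each $J \in \{1, \ldots, L\}$ put $c_{2^J} = c/L$ of the wires into gates of weighted fan-in $f = 2^J$ (realized, say, by unit-weight gates with $2^J$ inputs each, so that weighted fan-in, number of inputs, and the value bound $2l+1$ all agree). The total weighted number of wires is $\sum_{J=1}^L (c/L) n = cn$ as required, and the induced distribution on $f$ is $q_{2^J} = 1/L$. Intuitively a single fan-in scale is a poor strategy for the circuit designer — concentrating all gates at the optimal $f \approx c\log c$ only forces the savings down to about $1/(c^2\log c)$ — so the construction must spread the wires so that, \emph{for every} choice of $p$, a large population of gates is badly matched to $p$ and contributes negative savings.

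The core of the analysis is to factor the savings formula of \lemref{lem:Savings}. Writing $p = 2^{-i}$ and $f = 2^J$, both cases of $s_{p,f}$ carry a common factor $2^{-i}$, so
\begin{equation*}
  \expt_{f \sim \mathcal{F}}[s_{p,f}] = \frac{2^{-i}}{L}\sum_{J=1}^{L} \phi(J - i), \qquad
  \phi(m) = \begin{cases} \tfrac14 & m < -\log 4c, \\[2pt] \tfrac12 - c\,2^{-m}(m + \log 8c) & m \geq -\log 4c. \end{cases}
\end{equation*}
The function $\phi$ is the key object: it equals $\tfrac14$ for very negative $m$ (small fan-in, case one), rises toward $\tfrac12$ as $m \to \infty$ (large fan-in, case two), but dips \emph{deeply negative}, to about $-c\log(8c)$, near $m = 0$, i.e.\ exactly where the fan-in is matched to $p$ (so that $pf \approx 1$). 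I would first record that $\sum_{m \geq -\log 4c} 2^{-m}(m + \log 8c) = \Theta(c)$ — a convergent sum dominated by its first term $2^{-m} \approx c$ — so that the \emph{total} negative contribution of the matched and medium scales is $\Theta(c^2)$.

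Then I would split on the adversary's choice of $i$. For bulk $i$ (roughly $\log 8c \le i \le L$) the window $\{J - i\}$ sweeps over the entire negative dip, giving $\sum_J \phi(J-i) \approx \tfrac{L}{2} - \Theta(c^2)$; since $L = \Theta(c)$, the $\Theta(c^2)$ term dominates and the expectation is in fact negative. For small $i$ (large $p$, $i < \log 8c$) every scale lies in case two, and the same negative sum keeps $\frac{2^{-i}}{L}\sum_J \phi(J-i) = \frac{2^{-i}}{2} - \frac{c}{L}\,\Theta(\log c)$ non-positive once $L = \Theta(c)$ is chosen with a small enough constant — this is precisely the step that stops the largest-fan-in gates from handing the algorithm the $\approx \tfrac12$ savings it would collect at $p = 1$. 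Finally, for $i > L$ every scale falls into case one, $\sum_J \phi(J-i) \le L/4$, and the prefactor forces $\expt_{f\sim\mathcal{F}}[s_{p,f}] \le 2^{-i}/4 \le 2^{-L}/4 = 2^{-\Omega(c)}$. Taking the maximum over the three regimes yields the claim.

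The hard part is the simultaneous balancing at the boundaries: $L$ must be large enough that the case-one tail bound $2^{-L}$ is genuinely $2^{-\Omega(c)}$, yet small enough (a constant times $c$, not $c^2$) that for every small $i$ the accumulated negative savings of the matched gates still dominates the positive $\tfrac12$ contributed by the top scales. Making both hold \emph{for all $p$ at once} — rather than for a single adversarially chosen $p$, which any concentrated distribution already handles — is what forces the spread-out construction and is where the exponent $\Omega(c)$, as opposed to a merely polynomial bound, comes from. I expect the only genuine calculational care to be in the two transition windows $i \approx \log 8c$ and $i \approx L$, where I would fall back on the crude bounds $\phi \le \tfrac12$ and $\phi \le \tfrac14$ respectively to confirm that nothing positive survives the $2^{-i}/L$ prefactor.
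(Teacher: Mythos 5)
Your construction is exactly the paper's (equal wire mass at each of $\Theta(c)$ geometrically spaced fan-ins $2^1,\ldots,2^{\Theta(c)}$; the paper takes $c_{2^j}=1$ for $1\le j\le c$), and your analysis, though carried out as a full sum over the window function $\phi(J-i)$ rather than by isolating the single matched scale $j^*$ with $1\le p2^{j^*}\le 2$ and bounding the rest by $s_{p,f}\le p/2$, rests on the same mechanism: the matched scale contributes $\approx -\frac{p}{2}\log(8c)$ after normalization, which swamps the positive $\frac{p}{2}$, while for $p\le 2^{-\Omega(c)}$ the trivial bound $s_{p,f}\le p/2$ already gives $2^{-\Omega(c)}$. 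This is essentially the paper's proof, executed in somewhat greater detail.
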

\begin{proof}
  Let $p$ be the strategy of the algorithm designer and let
  $\mathcal{F}$ be the distribution such that for $1 \leq j \leq c$,
  $c_{2^j} = 1$ and $c_f = 0$ for any other $f$. By lemma
  \ref{lem:Convex} we have
	\begin{equation*}
          \expt_{f \sim \mathcal{F}} [s_{p,f}] = \sum_{j=1}^{c} \frac1c s_{p,2^j}
	\end{equation*}
	We argue that for large $c$ and $p \geq \frac1{2^c}$, the
        savings is negative.  Assume $p \geq \frac1{2^c}$. There is
        some $j^* \leq c$ such that for $f=2^{j^*}$, $1 \leq pf \leq
        2$. Using that for any $p$ and $f$, the savings $s_{p,f}$ is
        upper bounded by $\frac{p}2$ we get
	\begin{align*}
		\expt_{f \sim \mathcal{F}} [s_{p,f}] &= \sum_{j=1}^{c} \frac1c s_{p,2^j} \\
		&\leq \frac{p}2 - \frac1c s_{p,2^{j^*}} \\
		&= \frac{p}2 - \frac1c \frac{c}{2^{j^*}} \log\left(cp2^{j^*+3}\right) \\
		&\leq \frac{p}2 \left(1 - \left(\log\left(8c\right) + 1 \right) \right)
	\end{align*}
	For large $c$, the expectation is therefore negative. On the
        other hand, if $p \leq \frac1{2^c}$, then $\expt_{f \sim
          \mathcal{F}} [s_{p,f}] \leq \frac{1}{2^{c-1}}$.
\end{proof}

\section{Conclusion}
In this paper, we present the first nontrivial algorithm for deciding
the satisfiability of $cn$-wire threshold circuits of depth $2$. The
same result also applies to the special of case of $\zeroone$ Integer
Linear Programming with sparse constraints.  The algorithm improves
over exhaustive search by a factor $2^{sn}$ where $s= 1/c^{O(c^2)}$.


Several straightforward open questions remain.  Can we further improve the
savings?  The savings in our
algorithm is exponentially small in $c$, while the best known savings
for $cn$-size $\AC^0$ circuits is only polylogarithmically small in
$c$ \cite{ImpagliazzoMatthewsPaturi_2012_soda}.  Can we decrease this
gap?  If not, can we explain it in terms of the expressive power of the
circuits?

Our algorithm handles only linear size threshold circuits
of depth two.  Can we obtain nontrivial satisfiability algorithms for
slightly more relaxed models?  For example, it would be very
interesting to extend the result to larger depth circuits. 
It would also be nice to generalize the algorithm to deal with depth two
threshold circuits with linearly many gates.

It would
also be interesting to relax the restriction on the number of
wires. Unfortunately, as discussed earlier, 
it is  not be possible to obtain a constant 
savings algorithm for depth two threshold circuits of superlinearly
many wires
under $\seth$.

It would be extremely interesting to find  a subquadratic algorithm 
for the Vector Domination Problem for dimension $\omega(\log n)$, 
which would imply the refutation of $\seth$.


Our algorithm is a ``Split and List'' algorithm
\cite{Williams_2005_tcs}, split the variable set into subsets and list
all assignments to the subsets. As such, it inherently takes
exponential space.  Can we reduce the space requirement to polynomial
space? 

\noindent
{\bf Acknowledgments:}
We thank Dominik Scheder for the fruitful discussions on the Vector
Domination Problem. We also thank Shachar Lovett and Ryan Williams
for pointing us to a simpler algorithm for systems of linear
equations.

\bibliographystyle{plain}
\bibliography{../TeX/bib/complexity.bib}
\end{document}